\newcommand\fontSizeIfFig{1}
\newcommand\twoColFigSize{2}
\newcommand\twoColFigSize{1}
\definecolor{Gray}{gray}{0.85}
\newcolumntype{B}{>{\columncolor{Gray}}c}
\newcommand\given[1][]{\:#1\vert\:}
\newcommand{\tikzmark}[2]{\tikz[overlay,remember picture,baseline]
\node [anchor=base] (#1) {$#2$};}
\newcommand{\DrawVLine}[3][]{%
  \begin{tikzpicture}[overlay,remember picture]
    \draw[shorten <=0.3ex, #1] (#2.north) -- (#3.south);
  \end{tikzpicture}
}
\newtheorem{lemma}{Lemma}
\newtheorem{corollary}{Corollary}
\newtheorem{theorem}{Theorem}
\newtheorem{definition}{Definition}
\newtheorem{proposition}{Proposition}
\newtheorem{remark}{Remark}
\DeclareMathAlphabet{\mathbsf}{OT1}{cmss}{bx}{n}
\DeclareMathAlphabet{\mathssf}{OT1}{cmss}{m}{sl}
\DeclareMathAlphabet{\mathcsf}{OT1}{cmss}{sbc}{n}
\newcommand{\svv}[1]{\mathbf{#1}}
\DeclareSymbolFont{bsfletters}{OT1}{cmss}{bx}{n}  
\DeclareSymbolFont{ssfletters}{OT1}{cmss}{m}{n}
\DeclareMathSymbol{\bsfGamma}{0}{bsfletters}{'000}
\DeclareMathSymbol{\ssfGamma}{0}{ssfletters}{'000}
\DeclareMathSymbol{\bsfDelta}{0}{bsfletters}{'001}
\DeclareMathSymbol{\ssfDelta}{0}{ssfletters}{'001}
\DeclareMathSymbol{\bsfTheta}{0}{bsfletters}{'002}
\DeclareMathSymbol{\ssfTheta}{0}{ssfletters}{'002}
\DeclareMathSymbol{\bsfLambda}{0}{bsfletters}{'003}
\DeclareMathSymbol{\ssfLambda}{0}{ssfletters}{'003}
\DeclareMathSymbol{\bsfXi}{0}{bsfletters}{'004}
\DeclareMathSymbol{\ssfXi}{0}{ssfletters}{'004}
\DeclareMathSymbol{\bsfPi}{0}{bsfletters}{'005}
\DeclareMathSymbol{\ssfPi}{0}{ssfletters}{'005}
\DeclareMathSymbol{\bsfSigma}{0}{bsfletters}{'006}
\DeclareMathSymbol{\ssfSigma}{0}{ssfletters}{'006}
\DeclareMathSymbol{\bsfUpsilon}{0}{bsfletters}{'007}
\DeclareMathSymbol{\ssfUpsilon}{0}{ssfletters}{'007}
\DeclareMathSymbol{\bsfPhi}{0}{bsfletters}{'010}
\DeclareMathSymbol{\ssfPhi}{0}{ssfletters}{'010}
\DeclareMathSymbol{\bsfPsi}{0}{bsfletters}{'011}
\DeclareMathSymbol{\ssfPsi}{0}{ssfletters}{'011}
\DeclareMathSymbol{\bsfOmega}{0}{bsfletters}{'012}
\DeclareMathSymbol{\ssfOmega}{0}{ssfletters}{'012}
\DeclareRobustCommand{\prob}[1][{\rm Pr}]{\ensuremath {{#1}}}
\begin{document}

\allowdisplaybreaks
%

\title{Streaming Erasure Codes over Multi-hop Relay Network}
\ifdefined\TWOCOLUMN
\author{
\IEEEauthorblockN{Elad Domanovitz and Ashish Khisti}
\IEEEauthorblockA{
Department of Electrical and Computer Engineering\\
  University of Toronto\\
                    Toronto, ON M5S 3G4, Canada\\
                    E-mail: \texttt{\{elad.domanovitz, akhisti\}@utoronto.ca}
}  \and
\IEEEauthorblockN{Wai-Tian Tan, Xiaoqing Zhu, and John Apostolopoulos}
\IEEEauthorblockA{
Enterprise Networking Innovation Labs, Cisco Systems\\
Cisco Systems\\
San Jose, CA 95134, USA
}
}
\else
\author{Elad Domanovitz, Ashish Khisti, Wai-Tian Tan, Xiaoqing Zhu, and John Apostolopoulos 
\thanks{E. Domanovitz and A. Khisti are with the Department of Electrical and Computer Engineering, University of Toronto, Toronto, ON M5S 3G4, Canada (email: elad.domanovitz@utoronto.ca, akhisti@ece.utoronto.ca)}
\thanks{W.-T. Tan, X. Zhu, and J. Apostolopoulos are with Cisco Systems, San Jose, CA 95134, USA.}
\thanks{The material in this paper was presented in part at the 2020 IEEE
International Symposium on Information Theory, Los Angeles, CAL.}
}
\fi
\maketitle

\begin{abstract}
This paper studies low-latency streaming codes for the multi-hop network. The source is transmitting a sequence of messages (streaming messages) to a destination through a chain of relays where each hop is subject to packet erasures. Every source message has to be recovered perfectly at the destination within a delay constraint of $T$ time slots. In any sliding window of $T+1$ time slots, we assume no more than $N_j$ erasures introduced by the $j$'th hop channel. The capacity in case of a single relay (a three-node network) was derived by Fong \cite{fong2018optimal}, et al. While the converse derived for the three-node case can be extended to any number of nodes using a similar technique (analyzing the case where erasures on other links are consecutive), we demonstrate next that the achievable scheme, which suggested a clever symbol-wise decode and forward strategy, can not be straightforwardly extended without a loss in performance. The coding scheme for the three-node network, which was shown to achieve the upper bound, was ``state-independent'' (i.e., it does not depend on specific erasure pattern). While this is a very desirable property, in this paper, we suggest a ``state-dependent'' (i.e., a scheme which depends on specific erasure pattern) and show that it achieves the upper bound up to the size of an additional header. Since, as we show, the size of the header does not depend on the field size, the gap between the achievable rate and the upper bound decreases as the field size increases.
\end{abstract}

\section{Introduction}

Real-time interactive video streaming is an integral part of the day-to-day activity of many people in the world. Traditionally, most of the traffic on the internet is not sensitive to the typical delay induced by the network. However, as networks evolved, more and more people are using the network for real-time conversations, video conferencing, and on-line monitoring. According to \cite{cisco2018cisco}, IP video traffic will account for 82 percent of traffic by 2022. Further, live video is projected to grow 15-fold to reach 17 percent of Internet video traffic by 2022.

All types of traffic are susceptible to errors, and therefore many applications use an error-correcting mechanism. One fundamental difference between real-time video streaming and other types of traffic is the (much more stringent) latency requirement each packet has to meet in order to provide a good user experience. A very common error-correcting mechanism is automatic repeat request (ARQ). Using ARQ means that the latency (in case of an error) is at least three times the one-way delay, which in many cases may violate the latency requirements for real-time interactive video streaming.

An alternative method for handling errors in the transmission is forward error correction (FEC). Using FEC has the potential to lower the recovery latency since it does not require communication between the receiver and transmitter. However, in many cases, when FEC is designed, the emphasis is on its error-correcting capabilities while ignoring latency constraints. Two commonly used codes are Low-density parity-check (LDPC) \cite{gallager1962low,mackay1996near} and digital fountain codes \cite{luby2002lt,shokrollahi2006raptor}. The typical block length of these codes is very long (usually a few hundreds of symbols) hence precluding their use for real-time interactive applications.

Low-latency FEC codes are already implemented and have a noticeable impact on the quality of real-time interactive applications. Typically, maximum-distance separable (MDS) codes are used to transmit an extra parity-check packet per every two to five packets \cite{wang2010chitchat}. For example, in \cite{huang2010could}, the FEC implemented in Skype is described, and it is argued that this mechanism is one of the main contributors to the success of this application. 

Memory Maximum Distance Separable convolutional codes (m-MDS) discussed in \cite{justesen1974maximum,gabidulin1988convolutional,gluesing2006strongly} are a class of codes that guarantee decoding assuming the decoder has received sufficiently many parity-check packets. These codes serve as the baseline codes for achieving point-to-point capacity in channels with arbitrary erasures. While in traditional (systematic) m-MDS codes, the parity is appended to the data, in \cite{Karzand2017}, a design of a FEC aimed to reduce the end-to-end average in-order delay (i.e., the time packets spend at the receiver before they can be further processed which was studied in \cite{joshi2012playback}) was described in which a packet composed only from parity symbols is transmitted at a pre-defined rate. While this work considered inserting a single parity packet in each interval, in \cite{LiZhang2020}, this concept was extended to allow insertion of multiple packets in each interval, and thus, the benefit of this FEC was extended to highly lossy links. 

In another line of work, \cite{MartinianSundberg2004} derived the capacity of low-latency FEC (while denoting it as streaming codes) for a (deterministic) channel with bursts of erasures. This work was followed by a plurality of works \cite{badr2013streaming,badr2017layered,fong2019optimal,krishnan2018rate,domanovitz2019explicit,leong2012erasure,KrishnanLowField2020} which extended the channel model to contain both bursts and arbitrary erasures while analyzing a sliding window model. 


While all the works mentioned above-analyzed streaming codes for point-to-point channels, in \cite{fong2018optimal}, the performance of streaming codes for the three-node network was analyzed. As a first step, the channel model used in this paper can be denoted as ``deterministic arbitrary erasure channel''. In this channel model, the location of erasures can be arbitrary (does not necessarily occur in bursts). However, the number of erasures is (globally) upper bounded. For this model, the capacity of streaming codes was first established. Then, it was shown that the derived results also hold for a sliding window model in which only the number of erasures in any given window is upper bounded (i.e., the limit on the global number of erasures was removed).

When analyzing achievable schemes, in \cite{fong2018optimal}, a straightforward extension of point-to-point codes to a setup with a relay was first described. In this extension, each hop uses a point-to-point code, and optimization is carried on the allocation of the delay to be utilized by each code. Denoting this approach as ``message-wise'' decode and forward strategy (since each message is fully decoded at the relay prior to forwarding it), \cite{fong2018optimal} showed that this scheme is inferior to a more sophisticated scheme denoted as ``symbol-wise'' decode and forward in which the relay forwards the recovered symbols (before the entire message can be decoded) to the destination. Further, showing that the rate achieved by ``symbol-wise'' decode and forward coincides with the upper bound resulted in the capacity of the three-node relay network.

Analyzing the capacity of the three-node network shows that when constraints are imposed per segment rather than globally (while meeting the same global requirements), the capacity increases. For example, as we demonstrate next, treating the network as a single-hop link with a maximum of $N=N_1+N_2$ erasures and a total delay constraint of $T$ symbols is worse than analyzing a three-node network where a maximum of $N_1$ erasures are expected in the first segment and a maximum of $N_2$ erasures are expected in the second segment with a total delay of $T$ symbols. However, internet paths almost never consist of only a single relay (see, e.g., \cite{begtasevic2001measurements,mukaddam2011hop}). Hence, designing a streaming code for a path consisting of multiple links when possible (i.e., take into account the error behavior of each link rather than aggregate across all of the links) is expected to result in improved performance guarantees.

The coding scheme described in \cite{fong2018optimal}, which was shown to achieve the capacity of the network, has another appealing property, which is ``state-independent'', i.e., it does not depend on the specific location of erasures in the different segments. Unfortunately, as we show next, there is no straightforward extension of this scheme to a more general case (a network with more than three nodes). The scheme suggested in this paper for any number of relays is a ``state-dependent'' scheme, i.e., it is a scheme which reorders the symbols transmitted by each relay based on the erasure patterns that occurred in the previous links. While requiring an additional header to allow each relay to encode the received symbols transmitted to the next relay, we show that it can be easily used for any number of relays.


In this paper, we first extend the upper bound derived in \cite{fong2018optimal} to the general case of a multi-hop relay network. We then describe the state-dependent scheme for the general $L$ relay scenario and show it achieves the upper bound up to an additional overhead (a required header). We further show that the size of the header is a function of the required delay and the erasure pattern (hence it does not depend on the field size used by the code). Therefore, the gap from the upper bound decreases as the field size increases.

The rest of this paper is organized as follows. Section~\ref{sec:networkModel} outlines the network model of interest. Section~\ref{sec:standardDefAndKnownRes} presents the formulation of streaming codes and outlines the known results for basic network models. In this Section, the problem is defined as coding over the deterministic erasure model. Section~\ref{sec:mainRes} presents the main results of this paper. Section~\ref{sec:motExample} provides a motivating example. In this example, we show that a straightforward extension of the (state-independent) achievable scheme for a single relay results in a loss in the minimum delay that can be achieved (while maintaining the same rate) compared to the state-dependent scheme. Section~\ref{sec:converse} contains the proof for the upper bound on the achievable rate for a network that consists of any number of relays. Section~\ref{sec:codScheme} presents the state-dependent symbol-wise decode and forward coding scheme and contains proof on its achievable rate. Section~\ref{sec:upperBound} provides an upper bound on the error probability when using the state-dependent symbol-wise decode and forward coding scheme when used over a channel with random (i.i.d.) erasures. Section~\ref{sec:numerical} provides numerical results for different coding schemes used over four-node (two relay) network with random (i.i.d.) erasures. Finally, Section~\ref{sec:extToSliding} provides an extension of the presented results to the sliding window channel.

\begin{figure*}
    \begin{center}
        \begin{psfrags}
            \psfrag{s}[][][0.75]{$s=r_0$}
            \psfrag{d}[][][0.75]{$d=r_{L+1}$}
            \psfrag{r_1}[][][1]{$r_1$}
            \psfrag{r_2}[][][1]{$r_2$}
            \psfrag{r_k}[][][1]{$r_L$}
            \psfrag{N1}[][][1]{$N_1~{\rm Erasures}$}
            \psfrag{N2}[][][1]{$N_2~{\rm Erasures}$}
            \psfrag{Nk}[][][1]{$N_{L+1}~{\rm Erasures}$}
            \psfrag{A}[][][0.75]{${\bf s}_i\in\mathbb{F}^k$}
            \psfrag{B}[][][0.75]{${\bf x}_i^{(r_0)}\in\mathbb{F}^{n_1}$}
            \psfrag{C}[][][0.75]{${\bf y}_i^{(r_1)}\in\mathbb{F}^{n_1}$}
            \psfrag{D}[][][0.75]{${\bf x}_i^{(r_1)}\in\mathbb{F}^{n_2}$}
            \psfrag{E}[][][0.75]{${\bf y}_i^{(r_2)}\in\mathbb{F}^{n_2}$}
            \psfrag{F}[][][0.75]{${\bf x}_i^{(r_2)}\in\mathbb{F}^{n_3}$}
            \psfrag{G}[][][0.75]{${\bf y}_i^{(r_L)}\in\mathbb{F}^{n_{L}}$}
            \psfrag{H}[][][0.75]{${\bf x}_i^{(r_L)}\in\mathbb{F}^{n_{L+1}}$}
            \psfrag{I}[][][0.75]{${\bf y}_i^{(r_{L+1})}\in\mathbb{F}^{n_{L+1}}$}
            \psfrag{J}[][][0.75]{$\hat{\bf s}_{i-T}\in\mathbb{F}^{k}$}
            \includegraphics[width=\twoColFigSize\columnwidth]{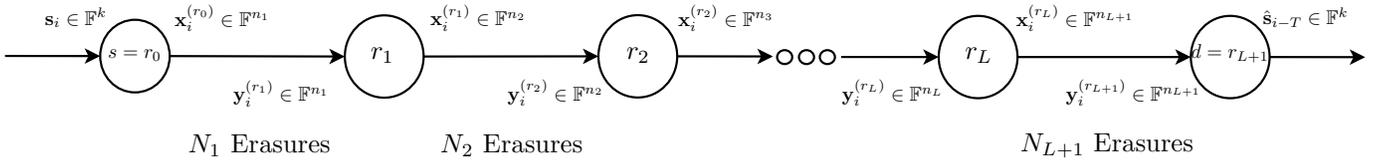}
        \end{psfrags}
    \end{center}
    \caption{Symbols generated in the $L$-node relay network at time $i$.}
    \label{fig:fig1}
\end{figure*}

\subsection{Network Model}
\label{sec:networkModel}
A source node wants to send a sequence of messages $\{{\bf s}_i\}_{i=0}^\infty$
to a destination node with the help of $L$ middle nodes $r_1,\ldots,r_L$. To ease notation we denote the source node as $r_0$, and destination node as $r_{L+1}$. Let $k$ be a non-negative integer, and $n_1,n_2,\ldots,n_{L+1}$ be $L+1$ natural numbers.

Each ${\bf s}_i$ is an element in $\mathbb{F}^k$ where $\mathbb{F}$ is some finite field. In each time slot $i\in\mathbb{Z}_+$, the source message ${\bf s}_i$ is encoded into a length-$n_1$ packet ${\bf x}_i^{(r_0)}\in\mathbb{F}^{n_1}$ to be transmitted to the first relay through the erasure channel $(r_0,r_1)$. The relay receives ${\bf y}^{(r_1)}_i\in\mathbb{F}^{n_1}\cup\{*\}$ where ${\bf y}^{(r_1)}_i$ equals either ${\bf x}_i^{(r_0)}$ or the erasure symbol $``*"$. In the same time slot, relay $r_1$ transmits ${\bf x}_i^{(r_1)}\in\mathbb{F}^{n_2}$ to relay $r_2$ through the erasure channel $(r_1, r_2)$. Relay $r_2$ receives ${\bf y}^{(r_2)}_i\in\mathbb{F}^{n_2}\cup\{*\}$ where ${\bf y}^{(r_2)}_i$ equals either ${\bf x}_i^{(r_1)}$ or the erasure symbol $``*"$. The same process continues (in the same time slot) until relay $r_L$ transmits ${\bf x}_i^{(r_L)}\in\mathbb{F}^{n_{L+1}}$ to the destination $r_{L+1}$ through the erasure channel $(r_L, r_{L+1})$. To simplify the analysis we note that we assume zero propagation delay and zero processing delay for the transmission. Hence, in case no coding is applied ($n_1=n_2=\ldots=n_{L+1}=k$) and no erasures occur, ${\bf y}^{(r_{L+1})}_i=s_i$. When such assumptions are relaxed, extensions to the results described in the paper can be naturally described (see, e.g. \cite{badr2017perfecting}).



We first assume that on the discrete timeline, each channel $(r_{j-1}, r_{j})$ introduces up to $N_j$ arbitrary erasures respectively. The symbols generated in the $L$-node relay network at time $i$ are illustrated in Figure~\ref{fig:fig1}.


\subsection{Standard Definitions and Known Results}
\label{sec:standardDefAndKnownRes}
\begin{definition}
An $(n_1,n_2,\ldots,n_{L+1},k,T)_{\mathbb{F}}$-streaming code consists of the following:
\begin{enumerate}
    \item A sequence of source messages $\{{\bf s}_i\}_{i=0}^{\infty}$ where ${\bf s}_i \in \mathbb{F}^k$.
    \item An encoding function $f_i^{(r_0)}:\underbrace{\mathbb{F}^k\times\ldots\times\mathbb{F}^k}_{i+1~{\rm times}}\to\mathbb{F}^{n_1}$ for each $i\in\mathbb{Z}_{+}$, where $f_i^{(r_0)}$ is used by node $r_0$ at time $i$ to encode ${\bf s}_i$ according to
    \begin{align*}
        {\bf x}_i^{(r_0)}=f_i^{(r_0)}\left({\bf s}_0,{\bf s}_1,\ldots,{\bf s}_i\right).
    \end{align*}
    \item A relaying function for node $r_j$ where $j\in\{1,\ldots,L\}$, \mbox{$f_i^{(r_j)}:\underbrace{\mathbb{F}^{n_{j}}\cup\{*\}\times\ldots\times\mathbb{F}^{n_j}\cup\{*\}}_{i+1~{\rm times}}\to\mathbb{F}^{n_{j+1}}$} for each $i\in\mathbb{Z}_{+}$, where $f_i^{(r_j)}$ is used by node $r_j$ at time $i$ to construct
    \begin{align*}
        {\bf x}_i^{(r_j)}=f_i^{(r_j)}\left({\bf y}_0^{(r_j)},{\bf y}_1^{(r_j)},\ldots,{\bf y}_i^{(r_j)}\right).
    \end{align*}
    \item A decoding function $\phi_{i+T}:\underbrace{\mathbb{F}^{n_L}\cup\{*\}\times\ldots\times\mathbb{F}^{n_L}\cup\{*\}}_{i+T+1~{\rm times}}\to\mathbb{F}^{n_{L+1}}$ for each $i\in\mathbb{Z}_{+}$ is used by node $r_{L+1}$ at time $i+T$ to estimate ${\bf s_i}$ according to
    \begin{align}
        \hat{\bf s}_i=\phi_{i+T}\left({\bf y}^{(r_{L+1})}_0,{\bf y}^{(r_{L+1})}_1,\ldots,{\bf y}_{i+T}^{(r_{L+1})}\right).
    \end{align}
\end{enumerate}
\label{def:def1}
\end{definition}
\begin{definition}
An erasure sequence is a binary sequence denoted by $e^{\infty}\triangleq\{e_i\}_{i=0}^{\infty}$ where \mbox{$e_i={\bf 1}\{{\rm erasure~occurs~at~time~i}\}$}.
\end{definition}
An $N$-erasure sequence is an erasure sequence $e^{\infty}$ that satisfies $\sum_{l=0}^{\infty}e_l=N$. Alternatively, we denote it as a $N$-deterministic erasure channel. The set of $N$-erasure sequences is denoted by $\Omega_N$. We further denote $N_1,\ldots,N_{L+1}$ deterministic erasure network model as $N_1,\ldots,N_{L+1}$-erasure sequences, each occur on a different channel, where for any $j\in\{1,\ldots,L+1\}$, the maximal number of erasures on channel $(r_{j-1},r_{j)}$ is $N_j$.
\begin{definition}
The mapping $g_{n}:\mathbb{F}^{n}\times \{0,1\}\to \mathbb{F}^{n}\cup\{*\}$ of an erasure channel is defined as
\begin{align}
    g_{n}({\bf x},e)=\begin{cases}
    {\bf x}~~~ {\rm if~}e=0,\\
    * ~~~ {\rm if~}e=1.
    \end{cases}
\end{align}
Denoting with $e^{j+1}\in\Omega_{N_{j+1}}$ an admissible erasure sequence in channel $(r_{j},r_{j+1})$, for any erasure sequence $e^{j,\infty}$ and any $(n_1,\ldots,n_{L+1},k,T)_{\mathbb{F}}$-streaming code, the following input-output relations holds for the erasure channel $(r_j,r_{j+1})$ for each $i\in\mathbb{Z}_+$:
\begin{align*}
    {\bf y}_i^{(r_{j+1})}=g_{n_{j}}\left({\bf x}^{(r_j)}_i,e_i^{j+1}\right)
    \label{eq:g}
\end{align*}

\end{definition}
\begin{definition}
An $(n_1,\ldots,n_{L+1},k,T)_{\mathbb{F}}$-streaming code is said to be $(N_1,N_2,\ldots,N_{L+1})$-achievable if the following holds for any $N_j$-erasure sequence $e^{j,\infty}\in\Omega_{N_j}$ ($j\in\{1,\ldots,L+1\}$), for all $i\in\mathbb{Z}_+$ and all ${\bf s}_i\in\mathbb{F}^k$, we have
\begin{align*}
    \hat{\bf s}_i={\bf s}_i
\end{align*}
where
\begin{align}
\hat{\bf s}_i=\phi_{i+T}\left(g_{n_{L+1}}\left({\bf x}^{(r_L)}_0,e^{L+1}_0\right),\ldots,g_{n_{L+1}}\left({\bf x}^{(r_L)}_{i+T},e^{L+1}_{i+T}\right)\right)
\end{align}
and for previous nodes
\begin{align}
{\bf x}^{(r_{j})}_i=f_i^{(r_{j})}\left(g_{n_{j}}\left({\bf x}^{(r_{j-1})}_0,e^{j}_0\right),\ldots,g_{n_{j}}\left({\bf x}^{(r_{j-1})}_{i},e^{j}_{i}\right)\right).
\end{align}
\label{def:def4}
\end{definition}
\begin{definition}
The rate of an $(n_1,n_2,\ldots,n_{L+1},k,T)_{\mathbb{F}}$-streaming code is $\frac{k}{\max\{n_1,n_2,\ldots,n_{L+1}\}}$.
\label{def:def5}
\end{definition}

If, for a specific $j$, all $N_{l\neq j}=0$ and $N_{j}\neq 0$, then the $L$-node relay network with erasures reduces to a point-to-point packet erasure channel. It was previously shown in \cite{badr2013streaming} that the maximum achievable rate of the point-to-point  packet erasure channel with $N_j=N$ arbitrary erasures and delay of $T$ symbols denoted by $C_{T,N}$ satisfies
\begin{align}
    C_{T,N}=\begin{cases}
    \frac{T-N+1}{T+1}~~~~T\geq N\\
    0~~~~~~~~~~~~{\rm otherwise}.
    \end{cases}
    \label{eq:P2P}
\end{align}

It was further shown that the capacity of the point-to-point channel with $N$ arbitrary erasures and delay of $T$ could be achieved by diagonally interleaving $(T+1,T-N+1)$ MDS code.

In \cite{fong2018optimal}, a three node relay network was analyzed, and the following theorem was shown.
\begin{theorem}[Theorem 1 in \cite{fong2018optimal}]
Fix any $(T,N_1,N_2)$. Recalling that the point-to-point capacity satisfies~(\ref{eq:P2P}), we have
\begin{align}
    C_{T,N_1,N_2}&=\min\left\{C_{T-N_2,N_1},C_{T-N_1,N_2}\right\} \nonumber \\
    &=\begin{cases}
    \frac{T-N_1-N_2+1}{T-\min\{N_1,N_2\}+1} ~~~~T\geq N_1+N_2\\
    0~~~~~~~~~~~~~~~~~~~~~~{\rm otherwise}.
    \end{cases}
    \label{eq:C_3node}
\end{align}

\end{theorem}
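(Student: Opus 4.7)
The plan splits the theorem into its converse and achievability halves. For the converse I need to prove both $C_{T,N_1,N_2}\le C_{T-N_2,N_1}$ and $C_{T,N_1,N_2}\le C_{T-N_1,N_2}$, and for achievability I need to match whichever of the two is tighter. Since the two bounds play symmetric roles with respect to the two links, establishing one in detail should suffice and the other follows by exchanging the roles of link $1$ and link $2$.

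For $C_{T,N_1,N_2}\le C_{T-N_2,N_1}$, I would pick an adversary that spends its entire link-$2$ erasure budget on the last $N_2$ slots of the decoding window for a fixed message ${\bf s}_i$, namely times $i+T-N_2+1,\ldots,i+T$. Then $\hat{\bf s}_i$ is forced to depend only on ${\bf y}_0^{(r_2)},\ldots,{\bf y}_{i+T-N_2}^{(r_2)}$, each of which is a deterministic function of the relay's link-$1$ receptions through that time. Composing the source encoder, the relay function, and the surviving link-$2$ outputs into a single virtual point-to-point encoder, ${\bf s}_i$ must be recoverable with delay $T-N_2$ through a channel that may suffer up to $N_1$ erasures; the point-to-point bound (\ref{eq:P2P}) then gives $R\le C_{T-N_2,N_1}$. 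The dual bound follows by placing $N_1$ consecutive link-$1$ erasures starting at time $i$, so that the relay cannot learn anything new about ${\bf s}_i$ before time $i+N_1$ and the relay-to-destination subsystem has only $T-N_1$ slots against an $N_2$-erasure budget.

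For achievability, taking $N_1\le N_2$ without loss of generality makes the target rate $R=C_{T-N_1,N_2}$, i.e.\ the rate of a $(T-N_1+1,\,T-N_1-N_2+1)$ MDS code. I would use this code at the source in the standard diagonally interleaved fashion, and employ a symbol-wise decode-and-forward relay strategy: rather than waiting to decode whole messages, the relay recovers each link-$1$ erased coded symbol as soon as the MDS structure allows (within at most $N_1$ extra slots) and forwards the cleaned stream on link $2$, where the destination still has delay $T-N_1$ in which to cope with up to $N_2$ link-$2$ erasures. The main obstacle is the joint bookkeeping of delays and packet sizes: I need to verify that every link-$1$ erased symbol becomes available at the relay within $N_1$ slots, that the stream emitted by the relay still looks to the destination like the output of the source's MDS code passed through a single $N_2$-erasure channel with delay $T-N_1$, and that the relay can be run with $n_2=n_1$ so that the overall rate is $k/\max(n_1,n_2)=R$. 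Once this three-step verification is complete, achievability meets the converse and (\ref{eq:C_3node}) follows.
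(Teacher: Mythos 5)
First, note that this theorem is imported from \cite{fong2018optimal}; the present paper does not reprove it, but your converse plan is exactly the technique the paper uses (and generalizes) in Section~\ref{sec:converse}: spending the link-$2$ budget on the last $N_2$ slots of the decoding window reduces the source--relay cascade to a point-to-point streaming code with delay $T-N_2$ against $N_1$ erasures, and a length-$N_1$ burst at times $i,\dots,i+N_1-1$ on link $1$ (together with conditioning on ${\bf s}_0,\dots,{\bf s}_{i-1}$) gives the dual bound; invoking \eqref{eq:P2P} (whose proof is the periodic-erasure counting argument recalled in Appendix~\ref{sec:app1}) then yields $k/n_1\le C_{T-N_2,N_1}$ and $k/n_2\le C_{T-N_1,N_2}$. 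That half is fine.

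The achievability half has a genuine gap, located exactly at the first of the three verification steps you flagged. The parenthetical claim that every link-$1$ erased coded symbol becomes available at the relay ``within at most $N_1$ extra slots'' is false for a diagonally interleaved MDS code: with the $(T-N_1+1,\,T-N_1-N_2+1)$ code you put at the source, a symbol erased in the first position of its diagonal can only be reconstructed after $k=T-N_1-N_2+1$ non-erased symbols of that codeword have arrived, i.e.\ up to $k+N_1-1$ slots after its original transmission. For instance, with $T=5$, $N_1=N_2=2$ (a $(4,2)$ code), erasures in the first two positions of a diagonal are recovered only $3$ and $2$ slots later, exceeding $N_1=2$. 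Consequently the relay cannot emit a ``cleaned stream'' delayed by exactly $N_1$; and because your choice $n_1=n_2=T-N_1+1$ leaves zero slack (the forwarded codeword for ${\bf s}_i$ must terminate exactly at time $i+T$), the end-to-end deadline is missed. Fixing this is the nontrivial content of the result: Fong et al.\ use different block lengths on the two hops ($n_1=T-N_2+1$, $n_2=T-N_1+1$) together with a relay re-encoding in which message symbols are placed in an order matched to the staggered times at which the relay actually recovers them (the scheme recalled in Table~\ref{tab:singleRelaySilas}), whereas letting the relay transmit whatever symbols are currently available and the erased ones later is precisely the state-dependent scheme of this paper, which requires a header and therefore only approaches, rather than attains, $C_{T,N_1,N_2}$. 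As written, your symbol-wise DF step does not go through.
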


\subsection{Main Results}
\label{sec:mainRes}
In this paper we first derive an upper bound for the achievable rate in $L+1$-node relay network.
\begin{theorem}
Assume a path with $L$ relays. For a target overall delay of $T$, where the maximal number of arbitrary erasures in link $(r_{j-1},r_j),~j\in\{1,\ldots,L+1\}$ is $N_j$. The achievable rate is upper bounded by
\begin{align}
    R&\leq \begin{cases}
    \frac{T-\sum_{l=1}^{L+1} N_l+1}{T-\min_{j}\left\{\sum_{l=1, l\neq j}^{L+1} N_l\right\}+1} \triangleq C^{+}_{T,N_1,\ldots,N_{L+1}} ~~~~~ T\geq \sum_{l=1}^{L+1} N_l \\
    0 ~~~~~~~~~~~~~~~~~~~~~~~~~~~~~~~~~~~~~~~~~~~~~~~~~~~ {\rm otherwise}.
    \end{cases}
\end{align}
\label{thm:conv}
\end{theorem}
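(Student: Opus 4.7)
The plan is to extend the three-node converse of \cite{fong2018optimal} to the $L$-relay setting by reducing the multi-hop problem, for each choice of link $j\in\{1,\ldots,L+1\}$, to a point-to-point problem whose delay budget has been shortened by the total number of erasures on the other links. Since (\ref{eq:P2P}) gives the P2P capacity exactly, each such reduction yields an upper bound on $R$, and the tightest of these $L+1$ bounds is the claimed inequality.

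Concretely, fix any $(N_1,\ldots,N_{L+1})$-achievable streaming code of rate $R$. For each $j$, I would construct from it a point-to-point streaming code of the same rate operating over a channel with up to $N_j$ arbitrary erasures and with delay budget $T_j \triangleq T - \sum_{l\neq j} N_l$. The construction fixes on every non-$j$ link $l$ a worst-case adversarial pattern of $N_l$ consecutive erasures: upstream bursts ($l<j$) are placed so as to postpone by $N_l$ time slots the moment information first reaches node $r_{j-1}$, while downstream bursts ($l>j$) are placed so as to postpone by $N_l$ slots the moment the destination can act on transmissions from $r_L$. With these bursts fixed in advance, nodes $r_0,\ldots,r_{j-1}$ fold into a single virtual source encoder for link $j$ and nodes $r_j,\ldots,r_{L+1}$ fold into a single virtual decoder; the original delay constraint $T$, after losing $\sum_{l\neq j}N_l$ slots to the bursts, translates into an effective P2P delay constraint of $T_j$. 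Applying (\ref{eq:P2P}) then gives
\begin{align*}
R \leq C_{T_j,N_j} = \frac{T - \sum_{l=1}^{L+1} N_l + 1}{T - \sum_{l\neq j} N_l + 1}
\end{align*}
whenever $T \geq \sum_{l=1}^{L+1} N_l$; otherwise the P2P bound forces $R=0$.

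Since the numerator above is independent of $j$, minimizing over $j$ reduces to maximizing the denominator, i.e., minimizing $\sum_{l\neq j} N_l$, which recovers the stated bound. The main obstacle is formalizing the reduction: one must verify that bursts allocated on distinct non-$j$ links compose additively, so that their timing penalties genuinely sum to $\sum_{l\neq j} N_l$ rather than interact in more subtle ways (for instance, the same slot being effectively ``freed'' twice across non-adjacent links, or a downstream burst being partially masked by an upstream one). A careful inductive timing argument, in the spirit of the three-node converse of \cite{fong2018optimal}, should dispense with this bookkeeping; once the bursts are positioned correctly, the remainder is a direct appeal to the point-to-point converse.
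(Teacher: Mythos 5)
Your proposal is correct and follows essentially the same route as the paper: for each link $j$ the paper also fixes consecutive worst-case bursts on the upstream links (delaying when information reaches $r_{j-1}$) and on the downstream links (delaying when the destination can use $r_L$'s transmissions), reducing the problem to a point-to-point streaming code over link $j$ with delay $T-\sum_{l\neq j}N_l$ and $N_j$ erasures, and then takes the minimum over $j$; the additivity bookkeeping you flag is resolved exactly by this explicit consecutive placement, and the paper merely re-derives the point-to-point bound via a periodic-erasure counting argument instead of citing the capacity formula as a black box. The $T<\sum_l N_l$ case is likewise handled as you suggest, by the same consecutive-burst pattern forcing $k=0$.
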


We then suggest an achievable scheme which achieves the upper bound up to a size of an overhead which is required by the scheme. Denoting
\begin{align}
    n_{\max}\triangleq\max_{j\in\{1,\ldots,L+1\}} \left(T-\sum_{l=1,l\neq {j}}^{L+1} N_l+1\right),
    \label{eq:Tj}
\end{align}
we show the following Theorem.

\begin{theorem}
Assume a link with $L$ relays. For a target overall delay of $T$, where the maximal number of arbitrary erasures in link $(r_j,r_{j+1}),~j\in\{0,\ldots,L\}$ is $N_{j+1}$ and $T\geq \sum_{l=1}^{L+1} N_l$. When $|\mathbb{F}|\geq n_{\max}$, The following rate is achievable.

\begin{align}
    R & \geq \frac{T-\sum_{l=1}^{L+1}N_l+1}{T-\min_j\left\{\sum_{l=1,l\neq j}^{L+1}N_l\right\}+1+\frac{n_{\max}\lceil\log\left(n_{\max}\right)\rceil}{\log(|\mathbb{F}|)}}
\end{align}
where $n_{\max}$ is defined in (\ref{eq:Tj}).
\label{thm:thm3}
\end{theorem}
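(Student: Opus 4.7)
The plan is to build a state-dependent symbol-wise decode-and-forward scheme, count how many field symbols per slot it transmits, and then argue the delay budget holds. Fix $k=T-\sum_{l=1}^{L+1}N_l+1$ and, for each hop $j\in\{1,\ldots,L+1\}$, set the per-hop code length $n_j=T-\sum_{l\neq j}N_l+1$, so that $n_{\max}=\max_j n_j = T-\min_j\{\sum_{l\neq j}N_l\}+1$. Since $|\mathbb{F}|\geq n_{\max}$, an $(n_{\max},k)$ Reed--Solomon generator matrix $G$ exists whose restriction to any $n_j$ of the $n_{\max}$ coordinates is still $(n_j,k)$-MDS. The source applies $G$ with diagonal interleaving: message ${\bf s}_i$ contributes one field symbol to each of the slots $i,i+1,\ldots,i+n_{\max}-1$.

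At each relay $r_j$, the following symbol-wise decode-and-forward operation is applied. After the $n_j$ incoming slots that carry coordinates of message ${\bf s}_i$'s $j$-th-hop codeword have arrived, at most $N_j$ of them are erasures, so the MDS property lets $r_j$ recover the missing coordinates; moreover, a standard streaming argument shows that individual coordinates can be recovered strictly earlier, as soon as they are uniquely determined by the unerased prefix. The relay then forwards the coordinates that $r_{j+1}$ needs, scheduled so that each meets the diagonal-interleaving deadline for hop $j+1$. The exact order in which these coordinates are transmitted depends on which upstream erasures occurred, and this is precisely what makes the scheme state-dependent.

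To let $r_{j+1}$ invert the state-dependent reordering (and feed the received symbols into the correct columns of $G$), $r_j$ prepends a short header describing which coordinates it is transmitting and any upstream erasure positions required for alignment. That header indexes a set of cardinality at most $n_{\max}$ and therefore uses at most $n_{\max}\lceil\log n_{\max}\rceil$ bits, i.e., at most $n_{\max}\lceil\log n_{\max}\rceil/\log|\mathbb{F}|$ field symbols. The total outgoing packet size is therefore $n_{\max}+n_{\max}\lceil\log n_{\max}\rceil/\log|\mathbb{F}|$, and since every packet is in one-to-one correspondence with $k$ information symbols of a single message, the resulting rate matches the bound of Theorem~\ref{thm:thm3}.

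The delay accounting is the crux and the expected main obstacle. For every $i$ and every admissible collection of $N_j$-erasure sequences on the $L+1$ hops, I need to show that $\hat{\bf s}_i={\bf s}_i$ by time $i+T$. My plan is an induction on $j\in\{0,1,\ldots,L+1\}$ with the invariant that, for every $i$, by time $i+T-\sum_{l=j+1}^{L+1}N_l$ relay $r_j$ has already transmitted, in the hop-$(j+1)$ slot positions dictated by diagonal interleaving, all coordinates that $r_{j+1}$ will need to meet its own deadline. The base case $j=0$ is the source's interleaving; the inductive step combines the per-hop slack $n_j-k=N_j$ with the symbol-wise early-decoding at $r_j$; the terminal case $j=L+1$ delivers $k$ MDS-recoverable coordinates at the destination by time $i+T$. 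The motivating construction of Section~\ref{sec:motExample}, which shows that the analogous induction without state-dependent reordering loses at least one unit of delay, both certifies that the header cannot be dropped and provides the template for verifying the induction step in general.
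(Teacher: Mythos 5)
Your construction is the same one the paper uses (state-dependent symbol-wise DF built from a mother $(n_{\max},k)$ MDS code, per-hop punctured/permuted versions of it, a per-symbol column-index header of $\lceil\log n_{\max}\rceil$ bits, and the identical rate accounting), so the rate arithmetic is fine. The gap is in the only part that actually requires proof: the delay feasibility. You explicitly leave it as a "plan," and the induction invariant you announce is false as stated. You require that by time $i+T-\sum_{l=j+1}^{L+1}N_l$ relay $r_j$ has \emph{already transmitted} all coordinates that $r_{j+1}$ will need. Since $T=\sum_{l=1}^{L+1}N_l+k-1$, that time equals $i+N_1^{j}+k-1$, whereas the hop-$(j+1)$ codeword of the diagonal $\tilde{\bf s}_i$ must occupy $n_{j+1}=k+N_{j+1}$ slots (it has to survive $N_{j+1}$ erasures), ending only at $i+N_1^{j+1}+k-1$; no scheme can satisfy your invariant literally, so the induction cannot close as written. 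The correct invariant, which is what the paper proves (Lemma~\ref{lem:lem1} and Corollaries~\ref{col:col2}--\ref{col:col3}), concerns \emph{decoding}, not transmission: relay $r_j$ forwards the $k-1$ unerased symbols of hop $j$ verbatim in slots $i+N_1^{j},\ldots,i+N_1^{j}+k-2$, and can decode $\tilde{\bf s}_i$ exactly at $i+N_1^{j}+k-1$ because the previous hop's block ends at $i+N_1^{j-1}+n_j-1=i+N_1^{j}+k-1$ and carries at least $k$ unerased symbols; it then spends the remaining $N_{j+1}$ slots (through $i+N_1^{j+1}+k-1$) sending erased coordinates or fresh columns of $\svv{G}_{\max}$, which is what makes the outgoing block $(n_{j+1},k)$-MDS. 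Iterating to $j=L$ puts the last coordinate at the destination exactly at time $i+T$, and one still has to convert the per-diagonal statement about $\tilde{\bf s}_i$ into the per-message statement about ${\bf s}_i$ (the paper's Lemma~\ref{lem:lem3}); none of this bookkeeping appears in your proposal.

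Two smaller points. Your appeal to "a standard streaming argument" that individual coordinates are recovered strictly earlier is neither needed nor justified: in the working scheme the relay never decodes early, it forwards unerased symbols as received and decodes only when the incoming block ends, and the timing coincidence above is exactly why that suffices. Also, having the source spread each diagonal over $n_{\max}$ slots (rather than $n_1$) does not break the rate bound, since the rate is set by the largest packet, but it deviates from the tight per-hop lengths $n_j$ that make the delay computation exact, so if you keep that choice you must redo the arrival-time computation at $r_1$ rather than citing the per-hop slack $n_j-k=N_j$.
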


\begin{remark}
Although the deterministic erasure model is formulated in such a way that link $(r_{j-1},r_j)$ introduces only a finite number of erasures over the discrete timeline, the maximum coding rate remains unchanged for the following sliding window model that can introduce infinite erasures. Every message must be perfectly recovered with delay $T$ as long as the numbers of erasures introduced by in link $(r_{j-1},r_j)$ in every sliding window of size T + 1 do not exceed $N_j$. This is further described in Section~\ref{sec:extToSliding}.
\end{remark}

We recall that for any natural numbers $L$ and $M$, a systematic maximum-distance separable (MDS) $(L+M, L)$-code is characterized by an $L\times M$ parity matrix $\svv{V}^{L\times M}$ where any $L$ columns of $\left[\svv{I}_L~\svv{V}^{L\times M}\right]\in\mathbb{F}^{L\times(L+M)}$
are independent. It is well known that a systematic MDS $(L + M,L)$-code always exists as long as $|\mathbb{F}|\geq L+M$ \cite{macwilliams1977theory}. To simplify notation, we sometimes denote $N_a^b=\sum_{l=a}^b N_l$. We will take all logarithms to base $2$ throughout this paper. We denote the $i$'th element of vector ${\bf x}$ as $x[i]$, or sometime as $[x][i]$.


\subsection{Motivating example}
\label{sec:motExample}
Consider a link with up to $N=2$ arbitrary erasures, where the delay constraint the transmission has to meet is $T=3$ symbols. The capacity of this link according to \eqref{eq:P2P} is $C_{3,2}=2/4$. Now, assume that in fact, this link is a three-node network ($L=1$), where up to $N_1=1$ erasures occur in link $(r_0,r_1)$ and up to $N_2=1$ erasures occur in link $(r_1,r_2)$, where transmission has to be decoded with the same overall delay of $T=3$ symbols. This is depicted in Figure~\ref{fig:threeNodeNetwork}. The capacity of this link according to (\ref{eq:C_3node}) is $C_{3,1,1}=2/3$, which is better than the point-to-point link.

\begin{figure}[ht]
    \begin{center}
        \begin{psfrags}
            \psfrag{s}[][][0.75]{$s=r_0$}
            \psfrag{d}[][][0.75]{$d=r_{2}$}
            \psfrag{r_1}[][][1]{$r_1$}
            \psfrag{r_2}[][][1]{$r_2$}
            \psfrag{r_k}[][][1]{$r_L$}
            \psfrag{N1}[][][1]{$N_1~{\rm Erasures}$}
            \psfrag{N2}[][][1]{$N_2~{\rm Erasures}$}
            \psfrag{Nk}[][][1]{$N_{L+1}~{\rm Erasures}$}
            \psfrag{A}[][][0.75]{${\bf s}_i\in\mathbb{F}^k$}
            \psfrag{B}[][][0.75]{${\bf x}_i^{(r_0)}\in\mathbb{F}^{n_1}$}
            \psfrag{C}[][][0.75]{${\bf y}_i^{(r_1)}\in\mathbb{F}^{n_1}$}
            \psfrag{D}[][][0.75]{${\bf x}_i^{(r_1)}\in\mathbb{F}^{n_2}$}
            \psfrag{E}[][][0.75]{${\bf y}_i^{(r_2)}\in\mathbb{F}^{n_2}$}
            \psfrag{F}[][][0.75]{${\bf x}_i^{(r_2)}\in\mathbb{F}^{n_3}$}
            \psfrag{G}[][][0.75]{${\bf y}_i^{(r_L)}\in\mathbb{F}^{n_{L}}$}
            \psfrag{H}[][][0.75]{${\bf x}_i^{(r_L)}\in\mathbb{F}^{n_{L+1}}$}
            \psfrag{I}[][][0.75]{${\bf y}_i^{(r_{L+1})}\in\mathbb{F}^{n_{L+1}}$}
            \psfrag{J}[][][0.75]{$\hat{\bf s}_{i-T}\in\mathbb{F}^{k}$}
            \includegraphics[width=0.5\columnwidth]{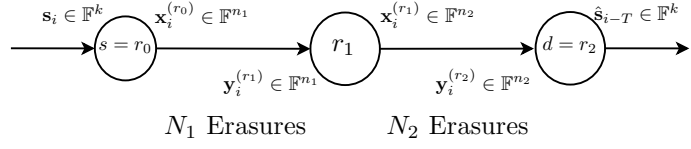}
        \end{psfrags}
    \end{center}
    \caption{A three-node relay network.}
    \label{fig:threeNodeNetwork}
\end{figure}



We start by recalling the coding scheme that was shown in \cite{fong2018optimal} to achieve capacity. The example we show next is the same as provided in \cite{fong2018optimal} for the scenario described above (in which the total delay $T=3$ and the maximal number of erasures in each channel is $N_1=N_2=1$).

Suppose node $s$ transmits two bits $a_i$ and $b_i$ at each discrete time $i\geq 0$ to node $d$ with delay 3. For each time $i$, node $s$ transmits the three-symbol packet ${\bf x}^{(r_0)}_i=\left[a_{i}~~b_{i}~~a_{i-2}+b_{i-1}\right]$ according to Table~\ref{tab:singleRelaySilas} where $a_j=b_j=0$ for any $j<0$ by convention, and the symbols highlighted in the same color are generated by the same block code.

Since channel $(s,r)$ introduces at most $N_1=1$ erasure, each $a_i$ and each $b_i$ can be perfectly recovered by node $r$ by time $i+2$ and time $i+1$ respectively. Therefore, at each time $i$, node $r$ should have recovered $a_{i-2}$ and $b_{i-1}$ perfectly with delays 2 and 1 respectively, and it will re-encode them into another three-symbol packet ${\bf x}^{(r_1)}_{i+1}=\left[b_{i-1}~~a_{i-2}~~b_{i-3}+a_{i-3}\right]$. This is depicted in Table~\ref{tab:singleRelaySilas}. Since $b_{i-3}$, $a_{i-3}$ and $b_{i-3}+a_{i-3}$ are transmitted by node $r$ at time $i-2$, $i-1$ and $i$ respectively, it follows from the fact $N_2=1$ that node $d$ can recover $a_{i-3}$ and $b_{i-3}$ by time $i$. Consequently, this symbol-wise DF strategy achieves a rate of 2/3.


\begin{table}[H]
    \centering
    \begin{subtable}{1\textwidth}
    \centering
    \begin{tabular} {|c|c|c|c|c|c|c|c|}
         \hline
         Time & $i-1$ & $i$& $i+1$ & $i+2$ & $i+3$ & $i+4$ & $i+5$  \\ \hline
         $a_i$ & \lfbox[border-style=solid,border-color=black]{$\textcolor{olive}{a_{i-1}}$} & \lfbox[border-style=dashed,border-color=black]{$\textcolor{red}{a_{i}}$} & \lfbox[border-style=dotted,border-color=black]{$\textcolor{blue}{a_{i+1}}$} & $a_{i+2}$ & $a_{i+3}$ & $a_{i+4}$ & $a_{i+5}$\\ \hline
         $b_i$ & $b_{i-1}$ & \lfbox[border-style=solid,border-color=black]{$\textcolor{olive}{b_{i}}$} & \lfbox[border-style=dashed,border-color=black]{$\textcolor{red}{b_{i+1}}$} & \lfbox[border-style=dotted,border-color=black]{$\textcolor{blue}{b_{i+2}}$} & $b_{i+3}$ & $b_{i+4}$ & $b_{i+5}$ \\ \hline
         $a_{i-2}+b_{i-1}$ & $a_{i-3}+b_{i-2}$ & $a_{i-2}+b_{i-1}$ & \lfbox[border-style=solid,border-color=black]{$\textcolor{olive}{a_{i-1}+b_{i}}$} & \lfbox[border-style=dashed,border-color=black]{$\textcolor{red}{a_{i}+b_{i+1}}$} & \lfbox[border-style=dotted,border-color=black]{$\textcolor{blue}{a_{i+1}+b_{i+2}}$} & $a_{i+2}+b_{i+3}$ & $a_{i+3}+b_{i+4}$\\
         \hline
    \end{tabular}
    \caption{Symbols transmitted by the source node $s$.}
   \end{subtable}
   \begin{subtable}{1\textwidth}
   \centering
    \begin{tabular} {|c|c|c|c|c|c|c|c|}
         \hline
         Time & $i-1$ & $i$& $i+1$ & $i+2$ & $i+3$ & $i+4$ & $i+5$ \\ \hline
         $b_{i-1}$ & $b_{i-2}$ & $b_{i-1}$ & \lfbox[border-style=solid,border-color=black]{$\textcolor{olive}{b_{i}}$} &
         \lfbox[border-style=dashed,border-color=black]{$\textcolor{red}{b_{i+1}}$} & \lfbox[border-style=dotted,border-color=black]{$\textcolor{blue}{b_{i+2}}$} & $b_{i+3}$ & $b_{i+4}$ \\ \hline
         $a_{i-2}$ & $a_{i-3}$ & $a_{i-2}$ & $\textcolor{olive}{a_{i-1}}$ & \lfbox[border-style=solid,border-color=black]{$\textcolor{red}{a_{i}}$} & \lfbox[border-style=dashed,border-color=black]{$\textcolor{blue}{a_{i+1}}$} & \lfbox[border-style=dotted,border-color=black]{$a_{i+2}$} & $a_{i+3}$ \\ \hline
          $a_{i-3}+b_{i-3}$ & $a_{i-4}+b_{i-4}$ & $a_{i-3}+b_{i-3}$ & $a_{i-2}+b_{i-2}$ & $a_{i-1}+b_{i-1}$ & \lfbox[border-style=solid,border-color=black]{$a_{i}+b_{i}$} & \lfbox[border-style=dashed,border-color=black]{$a_{i+1}+b_{i+1}$} & \lfbox[border-style=dotted,border-color=black]{$a_{i+2}+b_{i+3}$} \\ \hline
    \end{tabular}
    \caption{Symbols transmitted by relay node $r$.}
     \end{subtable}
       \begin{subtable}{1\textwidth}
   \centering
    \begin{tabular} {|c|c|c|c|c|c|c|c|}
         \hline
         Time & $i-1$ & $i$& $i+1$ & $i+2$ & $i+3$ & $i+4$ & $i+5$ \\ \hline
         $a_{i-3}$ & $a_{i-4}$ & $a_{i-3}$ & $a_{i-2}$ & $\textcolor{olive}{a_{i-1}}$ & \lfbox[border-style=solid,border-color=black]{$\textcolor{red}{a_{i}}$} & \lfbox[border-style=dashed,border-color=black]{$\textcolor{blue}{a_{i+1}}$} & \lfbox[border-style=dotted,border-color=black]{$a_{i+2}$}\\ \hline
         $b_{i-3}$ & $b_{i-4}$ & $b_{i-3}$ & $b_{i-2}$ & $b_{i-1}$ & \lfbox[border-style=solid,border-color=black]{$\textcolor{olive}{b_{i}}$} & \lfbox[border-style=dashed,border-color=black]{$\textcolor{red}{b_{i+1}}$} & \lfbox[border-style=dotted,border-color=black]{$\textcolor{blue}{b_{i+2}}$}\\ \hline
    \end{tabular}
    \caption{Estimates constructed by the destination node $d$.}
     \end{subtable}
     \caption{A state-less symbol-wise DF strategy for a single relay. Symbols belong to the same code used by the source ($s$) and symbols belong to the same code used by the relay ($r$) are marked with a frame with the same style.}
    \label{tab:singleRelaySilas}
\end{table}

An important feature of this code is the fact that it is a state-less code, i.e., the structure of the code does not depend on the specific erasure pattern at any of the segments. However, if another relay is to be considered (i.e., the destination is now replaced with relay $r_2$), assuming up to $N_3=1$ erasures in link $(r_2,d)$, the minimal delay required to support rate $2/3$ is $T=5$. The network of interest is depicted in Figure~\ref{fig:fourNodeNetwork}.

\begin{figure}[ht]
    \begin{center}
        \begin{psfrags}
            \psfrag{s}[][][0.75]{$s=r_0$}
            \psfrag{d}[][][0.75]{$d=r_{3}$}
            \psfrag{r_1}[][][1]{$r_1$}
            \psfrag{r_2}[][][1]{$r_2$}
            \psfrag{r_k}[][][1]{$r_L$}
            \psfrag{N1}[][][1]{$N_1~{\rm Erasures}$}
            \psfrag{N2}[][][1]{$N_2~{\rm Erasures}$}
            \psfrag{Nk}[][][1]{$N_{3}~{\rm Erasures}$}
            \psfrag{A}[][][0.75]{${\bf s}_i\in\mathbb{F}^k$}
            \psfrag{B}[][][0.75]{${\bf x}_i^{(r_0)}\in\mathbb{F}^{n_1}$}
            \psfrag{C}[][][0.75]{${\bf y}_i^{(r_1)}\in\mathbb{F}^{n_1}$}
            \psfrag{D}[][][0.75]{${\bf x}_i^{(r_1)}\in\mathbb{F}^{n_2}$}
            \psfrag{E}[][][0.75]{${\bf y}_i^{(r_2)}\in\mathbb{F}^{n_2}$}
            \psfrag{F}[][][0.75]{${\bf x}_i^{(r_2)}\in\mathbb{F}^{n_3}$}
            \psfrag{G}[][][0.75]{${\bf y}_i^{(r_L)}\in\mathbb{F}^{n_{L}}$}
            \psfrag{H}[][][0.75]{${\bf x}_i^{(r_L)}\in\mathbb{F}^{n_{L+1}}$}
            \psfrag{I}[][][0.75]{${\bf y}_i^{(r_{3})}\in\mathbb{F}^{n_{L+1}}$}
            \psfrag{J}[][][0.75]{$\hat{\bf s}_{i-T}\in\mathbb{F}^{k}$}
            \includegraphics[width=0.7\columnwidth]{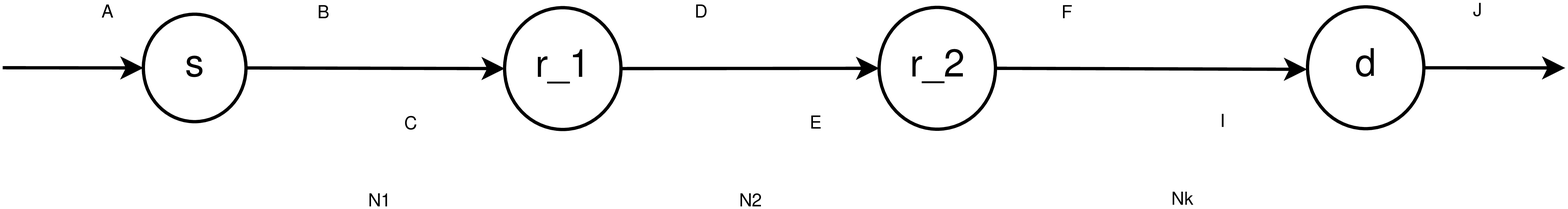}
        \end{psfrags}
    \end{center}
    \caption{A four-node relay network.}
     \label{fig:fourNodeNetwork}
\end{figure}

\begin{table}[H]
    \centering
    \begin{tabular} {|c|c|c|c|c|c|c|c|}
         \hline
         Time & $i-1$ & $i$& $i+1$ & $i+2$ & $i+3$ & $i+4$ & $i+5$  \\ \hline
         $a_{i-3}$ & $a_{i-4}$ & $a_{i-3}$ & $a_{i-2}$ & \lfbox[border-style=solid,border-color=black]{$\textcolor{olive}{a_{i-1}}$} & \lfbox[border-style=dashed,border-color=black]{$\textcolor{red}{a_{i}}$} & $a_{i+1}$ & $a_{i+2}$\\ \hline
         $b_{i-3}$ & $b_{i-4}$ & $b_{i-3}$ & $b_{i-2}$ & $b_{i-1}$ & \lfbox[border-style=solid,border-color=black]{$\textcolor{olive}{b_{i}}$} & \lfbox[border-style=dashed,border-color=black]{$\textcolor{red}{b_{i+1}}$} & $b_{i+2}$ \\ \hline
         $a_{i-5}+b_{i-4}$ & $a_{i-6}+b_{i-5}$ & $a_{i-5}+b_{i-4}$ & $a_{i-4}+b_{i-3}$ & $a_{i-3}+b_{i-2}$ & $a_{i-2}+b_{i-1}$ & \lfbox[border-style=solid,border-color=black]{$\textcolor{olive}{a_{i-1}+b_{i}}$} & \lfbox[border-style=dashed,border-color=black]{$\textcolor{red}{a_{i}+b_{i+1}}$}\\
         \hline
    \end{tabular}
    \caption{Symbols transmitted by node $r_2$ when trying to extend the state-less symbol-wise DF strategy. Symbols belong to the same code are denoted with frame with the same style.}
    \label{tab:singleRelaySilasAddRelay}
\end{table}

This can be seen since basically using the coding scheme in Table~\ref{tab:singleRelaySilas}, relay $r_2$ can be viewed as source $s$ with delay of $T=3$ packets (essentially, the delay of different symbols is ``reset''). Due to causality, relay $r_2$ can only use the coding scheme of the sender, depicted in Table~\ref{tab:singleRelaySilasAddRelay}. It can be viewed that when $N_3=1$, symbol $a_i$ is guaranteed to be recovered only at time $i+5$.

In this paper, we suggest a state-dependent scheme, i.e., a scheme in which the order of symbols in each block code used (and thus the order in each diagonal) is set according to the erasure pattern in the previous link. The order of the symbols is transmitted to the receiver to allow decoding. Hence, additional overhead is required. We first show an example of the suggested scheme to the three-node network and then show how to extend it to a four-node network.

In the proposed scheme, the source $r_0$ uses the same code suggested in \cite{fong2018optimal}, i.e., a $(3,2)$ MDS code that can recover any arbitrary single erasure in a delay of two symbols combined with diagonal interleaving (i.e., the code is applied on the diagonals). For each time $i$, node $s$ transmits the three-symbol packet (symbols belong to the same block code are marked with the same color).

\begin{table}[ht]
\begin{center}
\begin{tabular} {|c|c|c|c|c|c|c|}
     \hline
     Time & $i-1$ & $i$& $i+1$ & $i+2$ & $i+3$ & $i+4$ \\ \hline
     Header & 123 & \tikzmark{topD}{123} & 123 & 123 & 123 & 123 \\ \hline
     $a_i$ & $\textcolor{olive}{a_{i-1}}$ & $\textcolor{red}{a_{i}}$ & $\textcolor{blue}{a_{i+1}}$ & $a_{i+2}$ & $a_{i+3}$ & $a_{i+4}$ \\ \hline
     $b_i$ & $b_{i-1}$ & $\textcolor{olive}{b_{i}}$ & $\textcolor{red}{b_{i+1}}$ & $\textcolor{blue}{b_{i+2}}$ & $b_{i+3}$ & $b_{i+4}$ \\ \hline
     $a_{i-2}+$ & $a_{i-3}+$ & $a_{i-2}+$ & $\textcolor{olive}{a_{i-1}+}$ & $\textcolor{red}{a_{i}+}$ & $\textcolor{blue}{a_{i+1}+}$ & $a_{i+2}+$ \\
     $b_{i-1}$ & $b_{i-2}$ & \tikzmark{rightD}{b_{i-1}} & $\textcolor{olive}{b_{i}}$ & $\textcolor{red}{b_{i+1}}$ & $\textcolor{blue}{b_{i+2}}$ & $b_{i+3}$ \\ \hline
\end{tabular}
\DrawVLine[black, thick, opacity=0.2]{topD}{rightD}
\end{center}
\caption{Transmission of the Source in channel $(r_0,r_1)$. Symbols belong to the same code are marked with the same color.}
\label{table:tabR0}
\end{table}

When there are no erasures, relay $r_1$ uses the same code as the source $r_0$ \emph{while delaying it by one symbol}, i.e. relay $r_1$ sends the following three-symbol packet ${\bf x}^{(r_1)}_{i+1}=\left[a_i,b_{i},a_{i-2}+b_{i-1}\right]$.

If an erasure occurred, the relay would send any available symbols (per diagonal) in the received order until it can decode the information symbols from this block code. Then, the erased symbols will be sent. For example, assuming that a symbol at time $i$ is erased in link $(r_0,r_1)$, relay $r_1$ will send ${\bf x}^{(r_1)}_{i+1}=\left[b_{i+1},b_i,a_{i-2}+b_{i-1}\right]$ and ${\bf x}^{(r_1)}_{i+2}=\left[a_{i+1},a_i,a_{i-1}+b_{i}\right]$ as depicted in Table \ref{table:tabR1} below. Again, symbols belong to the same block code are marked with the same color. Further, headers which are different than the ones used by $r_0$ are marked with a frame.




\begin{table}[ht]
\begin{center}
\begin{tabular} {|c|c|c|c|c|c|c|}
     \hline
     Time & $i-1$ & $i$& $i+1$ & $i+2$ & $i+3$ & $i+4$ \\ \hline
      Header & 123 & 123 & \fbox{223} & \tikzmark{topD}{\fbox{113}} & 123 & 123 \\ \hline
      & $a_{i-2}$ & $\textcolor{olive}{a_{i-1}}$ & $\textcolor{red}{\bf{b_{i+1}}}$ & $\textcolor{blue}{a_{i+1}}$ & $a_{i+2}$ & $a_{i+3}$ \\ \hline
      & $b_{i-2}$ & $b_{i-1}$ & $\textcolor{olive}{b_{i}}$ & $\textcolor{red}{\bf{a_{i}}}$ & $\textcolor{blue}{b_{i+2}}$ & $b_{i+3}$ \\ \hline
     & $a_{i-4}+$ & $a_{i-3}+$ & $a_{i-2}+$ & $\textcolor{olive}{a_{i-1}+}$ & $\textcolor{red}{a_{i}+}$ & $\textcolor{blue}{a_{i+1}+}$ \\
     & $b_{i-3}$ & $b_{i-2}$ & $b_{i-1}$ & \tikzmark{rightD}{\textcolor{olive}{b_{i}}} & $\textcolor{red}{b_{i+1}}$ & $\textcolor{blue}{b_{i+2}}$ \\ \hline
\end{tabular}
\DrawVLine[black, dashed, opacity=0.8]{topD}{rightD}
\end{center}
\caption{Transmission of relay $r_1$ in channel $(r_1,r_2)$, given that symbol $i$ was erased when transmitted in link $(r_0,r_1)$. Symbols belong to the same code are marked with the same color.}
\label{table:tabR1}
\end{table}

We note that the erasure in time $i$ in link $(r_0,r_1)$ caused a change in the order of the symbols in packets $i+1$ and $i+2$. Denoting the order of symbols sent from $r_0$ in each code (alternatively on each diagonal) as $[1,2,3]$, in this example, the header of each packet is composed of the location of the symbols from each block code (the order of the symbols in each diagonal) as they would appear in the code used by $r_0$.

We show next that the code used by each relay on each diagonal can be viewed as a punctured code of a code that is used by the ``bottleneck'' relay. i.e., the relay with the smallest rate. Hence, we note that it can be viewed as if the suggested coding scheme only changes the order of symbols per diagonal (taken from the code with the lowest rate), and it does not add or remove symbols. Therefore, using a single index per symbol suffices to allow recovery at each destination.

In our example, as can bee seen in Table~\ref{table:tabR1}, at time $i$ relay $r_1$ sends ${\bf x}_{i}^{(r_1)}=[a_{i-1}~~b_{i-1}~~a_{i-3}+b_{i-2}]$ with header $``123"$ indicating that the first symbol is the first symbol (marked with underline) in the code $\left[\underline{a_{i-1}}~~b_{i}~~a_{i-1}+b_{i}\right]$, the second symbol is the second symbol in the code $\left[a_{i-2}~~\underline{b_{i-1}}~~a_{i-2}+b_{i-1}\right]$ and the third symbol is the code $\left[a_{i-3}~~b_{i-2}~~\underline{a_{i-3}+b_{i-2}}\right]$.

However, following the erasure occurred at link $(r_0,r_1)$ at time $i$, relay $r_1$ can not send symbol $a_i$ at time $i+1$ as planned (as if there was no erasure). However, it can send $b_{i+1}$ which was not erased. With respect to the second symbol, we note that since $a_{i-1}$ and $a_{i-1}+b_i$ were received, $b_i$ can be recovered and used. Similarly, since $a_{i-2}$ and $b_{i-1}$ were received, symbol $a_{i-2}+b_{i-1}$ can be generated and used. Thus, at time $i+1$ the relay can send ${\bf x}_{i+1}^{(r_1)}=[b_{i+1}~~b_{i}~~a_{i-2}+b_{i-1}]$. To indicate the change in order of the symbols used in the first code, the header is changed to $``223"$ which should be indicating that the first symbol is the second symbol in the code associated with this diagonal.

At time $i+2$, the relay can recover and send $a_i$. Hence, it sends ${\bf x}_{i+2}^{(r_1)}=[a_{i+1}~~a_{i}~~a_{i-1}+b_{i}]$ with header $``113"$ indicating now that the second symbol is the first symbol in the code associated with this diagonal. It can be easily verified that the destination can recover the original data at a delay of $T=3$ (assuming any single arbitrary erasure in the link between the relay and destination).





This concept can be applied to additional relays if they exist. For example consider four-node network ($L=2$). The transmission scheme on the next relay $r_2$ (in this specific example), is the same as the transmission scheme of the first relay, i.e., in case there is no erasure, transmit (on each diagonal) the symbols in the same order as received, delayed by one symbol (i.e., a total delay of two symbols from the sender). If an erasure occurred before $r_2$ decoded the information symbols, transmit the available symbols (again, it is guaranteed that there will be enough symbols). When the information symbols can be decoded, transmit the erased symbols.

For example, in case the symbol transmitted from relay $r_1$ to $r_2$ at time $i+2$ is erased, the suggested transmission scheme of relay $r_2$ is given in Table~\ref{table:tabR2} below. Again, symbols belong to the same block code are marked with the same color. Further, headers that are different than the ones used by the relay $r_1$ are marked with a frame.

\begin{table}[ht]
\begin{center}
\begin{tabular} {|c|c|c|c|c|c|c|}
     \hline
     Time & $i-1$ & $i$& $i+1$ & $i+2$ & $i+3$ & $i+4$ \\ \hline
     Header & 123 & 123 & 123 & 223 & \fbox{213} & \fbox{113} \\ \hline
     & $a_{i-3}$ & $a_{i-2}$ & $\textcolor{olive}{a_{i-1}}$ & $\textcolor{red}{b_{i+1}}$ & $\textcolor{blue}{b_{i+2}}$ & $a_{i+2}$ \\ \hline
     & $b_{i-3}$ & $b_{i-2}$ & $b_{i-1}$ & $\textcolor{olive}{b_i}$ & $\textcolor{red}{a_{i}}$ & $\textcolor{blue}{a_{i+1}}$
     \\ \hline
     & $a_{i-5}+$ & $a_{i-4}+$ & $a_{i-3}+$ & $a_{i-2}+$ & $\textcolor{olive}{a_{i-1}}$ & $\textcolor{red}{a_{i}+}$ \\
     & $b_{i-4}$ & $b_{i-3}$ & $b_{i-2}$ & $b_{i-1}$ & $\textcolor{olive}{b_i}$ & $\textcolor{red}{b_{i+1}}$ \\ \hline
\end{tabular}
\end{center}
\caption{Transmission of relay $r_2$ in channel $(r_2,r_3)$, given that symbol $i+2$ was erased when transmitted in link $(r_1,r_2)$. Symbols belong to the same code are marked with the same color.}
\label{table:tabR2}
\end{table}

We note that the erasure in time $i+2$ in link $(r_1,r_2)$ caused a change in order of the symbols in packets $i+3$ and $i+4$ (the order of symbols in time $i+2$ is not the original order, yet it is the same as was transmitted from $r_1$ at time $i+1$).

Since the same $(3,2)$ code is used (with a different order of symbols which is communicated to the receiver), it can be easily verified that each packet can be recovered up to delay of $T=4$ symbols for any arbitrary erasure in the link between the relay and the destination.


In this example, the maximal size of the header is three numbers, each taken from $\{1,2,3\}$. Hence, its maximal size of the header is $3\log(3)$ bits. Since the block code used in each link transmits two bits using three bits in every channel use, we conclude that the scheme achieves a rate of
\begin{align}
    R&=\frac{2}{3+3\lceil\log(3)\rceil}. 
\end{align}

We show next that this idea can be extended to transmitting symbols taken from any field $\mathbb{F}$ thus in general, the achievable rate when $T=4$ and $N_1=N_2=N_3=1$ is

\begin{align}
    R&=\frac{2\cdot\log(|\mathbb{F}|)}{3\cdot\log(|\mathbb{F}|)+3\lceil\log(3)\rceil} \nonumber \\
    &=\frac{2}{3+\frac{3\lceil\log(3)\rceil}{\log(|\mathbb{F}|)}}, 
\end{align}
which approaches $2/3$ as the field size increases. As we further show next, the upper bound for this scenario is indeed $2/3$.

\section{Proof of the upper bound}
\label{sec:converse}
Fix any $(N_1,\ldots,N_{L+1},T)$. Suppose we are given an $(N_1,\ldots,N_{L+1})$-achievable $(n_1,\ldots,n_{j+1},k,T)_{\mathbb{F}}$-streaming code for some $n_1,\ldots,n_{j+1},k$ and $\mathbb{F}$. Our goal is to show that
\begin{align}
    \frac{k}{\max\left\{n_1,\ldots,n_{L+1}\right\}}&\leq\min_{j\in\{1,2,\ldots,L+1\}}\frac{T-\sum_{l=1}^{L+1} N_l+1}{T-\sum_{l=1,l\neq j}^{L+1} N_l +1}.
\end{align}

To this end, we let $\{{\bf s}_i\}_{i\in\mathbb{Z}_+}$ be i.i.d. random variables where ${\bf s}_0$ is uniform on $\mathbb{F}^k$. Since the $(n_1,\ldots,n_{j+1},k,T)_{\mathbb{F}}$-streaming code is $(N_1,\ldots,N_{L+1})$-achievable, it follows from Definition~\ref{def:def4} that
\begin{align}
    H\left({\bf s}_i\given[\Big]{\bf y}^{(L+1)}_0,{\bf y}^{(L+1)}_1,\ldots,{\bf y}^{(L+1)}_{i+T}\right)=0
    \label{eq:entropy}
\end{align}
for any $i\in\mathbb{Z}_{+}$ and any $e^{j,\infty}\in\Omega_{N_j}$. Consider the two cases.

{\bf Case $T<\sum_{l=1}^{L+1}N_l$:}

Let $e^{j,\infty}\in\Omega_{N_j}$ be the error sequence on link $(r_{j-1},r_j)$ where $j\in\{1,2,\ldots,L+1\}$ such that
\begin{align}
    e^{j,\infty}_i=\begin{cases}
    1~~~{\rm if}~\sum_{l=1}^{j-1}N_l\leq i \leq \sum_{l=1}^{j}N_l-1\\
    0~~~{\rm otherwise}.
    \end{cases}
    \label{eq:defOfeJ}
\end{align}

We note that (\ref{eq:defOfeJ}) means that ${\bf y}^{(r_1)}_{N_1}$ is the first packet which can be used to recover ${\bf s}_0$ at $r_1$. Further, ${\bf y}^{(r_2})_{N_1+N_2}$ is the first packet which can be used to recover ${\bf s}_0$ at $r_2$. Continuing the transmission across all other relays, it follows that ${\bf y}^{(r_{L+1})}_{\sum_{N_l}}$ is the first packet which can be used to recover ${\bf s}_0$. Since $T<\sum_{l=1}^{L+1}N_l$ it follows the delay constraint can not be met.

Hence, due to (\ref{eq:defOfeJ}) and Definition \ref{def:def1}, we have
\begin{align}
    I\left({\bf s}_0;{\bf y}^{(L+1)}_0,{\bf y}^{(L+1)}_1,\ldots,{\bf y}^{(L+1)}_{T}\right)=0.
    \label{eq:mutualInfo}
\end{align}

Combining (\ref{eq:entropy}), (\ref{eq:mutualInfo}) and the assumption that $T<\sum_{l=1}^{L+1}N_l$, we obtain $H({\bf s}_0)=0.$ Since ${\bf s}_0$ consists of $k$ uniform random variables in $\mathbb{F}$, the only possible value of $k$ is zero, which implies
\begin{align}
    \frac{k}{\max\left\{n_1,\ldots,n_{L+1}\right\}}=0.
\end{align}

{\bf Case $T\geq\sum_{l=1}^{L+1}N_l$:}

The proof follows the footsteps of \cite{fong2018optimal}. We start by generalizing the arguments given in  \cite{fong2018optimal} for the first and second segments to the first and last segments in our case. Then we show how similar techniques can be used to derive a constraint on the code to be used in an intermediate segment.

{\bf First Segment (link $(r_0,r_1)$):}

First we note that for every $i\in\mathbb{Z}_+$, message ${\bf s}_i$ has to be perfectly recovered by node $r_1$ by time $i+T-\sum_{l=2}^{L+1}N_l$ given that ${\bf s}_0,{\bf s}_1,\ldots,{\bf s}_{i-1}$ have been correctly decoded by node $r_1$, or otherwise a length $N_2$ burst erasure from time \mbox{$i+T-\sum_{l=2}^{L+1}N_l+1$} to $i+T-\sum_{l=3}^{L+1}N_l$ introduced on channel $(r_1,r_2)$ followed by a length $N_3$ burst erasure from time $i+T-\sum_{l=3}^{L+1}N_l+1$ to $i+T-\sum_{l=4}^{L+1}N_l$ introduced on channel $(r_2,r_3)$ and so on until a length $N_{L+1}$ burst erasure from time $i+T-N_{L+1}+1$ to $i+T$ would result in a decoding failure for node $r_1$, node $r_2$ and all the nodes up to the destination $r_{L+1}$.

Recalling that $N_2^{L+1}=\sum_2^{L+1}N_l$ it then follows that
\begin{align}
    H\left({\bf s}_i\given[\Big]\left\{{\bf x}_i^{(r_0)},{\bf x}_{i+1}^{(r_0)},\ldots,{\bf x}_{i+T-N_2^{L+1}}^{(r_0)} \right\}\setminus \left\{{\bf x}_{\theta_1},\ldots,{\bf x}_{\theta_{N_1}}\right\},{\bf s}_0,\ldots,{\bf s}_{i-1}\right)=0
    \label{eq:entropyFirstSegment}
\end{align}
for any $i\in\mathbb{Z}_+$ and $N_1$ non-negative integers denoted by $\theta_1,\ldots,\theta_{N_1}$. We note that the following holds (by assuming, for example, the last $N_1$ symbols in every window of $T-N_2^{L+1}$ symbols starting time $i=0$ are erased):
\begin{align}
    &H\left({\bf s}_0\given[\Big]\left\{{\bf x}_0^{(r_0)},{\bf x}_{1}^{(r_0)},\ldots,{\bf x}_{T-N_2^{L+1}-N_1}^{(r_0)} \right\}\right)=0 \nonumber \\
    &H\left({\bf s}_1\given[\Big]\left\{{\bf x}_1^{(r_0)},{\bf x}_{2}^{(r_0)},\ldots,{\bf x}_{T-N_2^{L+1}-N_1}^{(r_0)},{\bf x}_{1\cdot(T-N_2^{L+1}+1)}^{(r_0)} \right\},{\bf s}_0\right)=0 \nonumber \\
    &~~~~~~~~~~~~~~~~~~~~~~~~~~~~~~\vdots \nonumber \\
    &H\left({\bf s}_{T-N_2^{L+1}-N_1}\given[\Big]\left\{{\bf x}_{T-N_2^{L+1}-N_1}^{(r_0)},{\bf x}_{1\cdot(T-N_2^{L+1}+1)}^{(r_0)},{\bf x}_{1\cdot(T-N_2^{L+1}+1)+1}^{(r_0)},\ldots, {\bf x}_{1\cdot(T-N_2^{L+1}+1)+T-N_2^{L+1}-N_1-1}^{(r_0)}
    \right\},{\bf s}_0,\ldots,{\bf s}_{T-N_2^{L+1}-N_1-1}\right) \nonumber \\
    &=0 \nonumber \\
    &H\left({\bf s}_{T-N_2^{L+1}-N_1+1}\given[\Big]\left\{{\bf x}_{1\cdot(T-N_2^{L+1}+1)}^{(r_0)},\ldots, {\bf x}_{1\cdot(T-N_2^{L+1}+1)+T-N_2^{L+1}-N_1}^{(r_0)}
    \right\},{\bf s}_0,\ldots,{\bf s}_{T-N_2^{L+1}-N_1}\right)=0 \nonumber \\
    &H\left({\bf s}_{T-N_2^{L+1}-N_1+2}\given[\Big]\left\{{\bf x}_{1\cdot(T-N_2^{L+1}+1)}^{(r_0)},\ldots, {\bf x}_{1\cdot(T-N_2^{L+1}+1)+T-N_2^{L+1}-N_1}^{(r_0)}
    \right\},{\bf s}_0,\ldots,{\bf s}_{T-N_2^{L+1}-N_1+1}\right)=0 \nonumber \\
    &~~~~~~~~~~~~~~~~~~~~~~~~~~~~~~\vdots \nonumber \\
    &H\left({\bf s}_{T-N_2^{L+1}-1}\given[\Big]\left\{{\bf x}_{1\cdot(T-N_2^{L+1}+1)}^{(r_0)},\ldots, {\bf x}_{1\cdot(T-N_2^{L+1}+1)+T-N_2^{L+1}-N_1}^{(r_0)}
    \right\},{\bf s}_0,\ldots,{\bf s}_{T-N_2^{L+1}-2}\right)=0 \nonumber \\
    &H\left({\bf s}_{T-N_2^{L+1}}\given[\Big]\left\{{\bf x}_{1\cdot(T-N_2^{L+1}+1)}^{(r_0)},\ldots, {\bf x}_{1\cdot(T-N_2^{L+1}+1)+T-N_2^{L+1}-N_1}^{(r_0)}
    \right\},{\bf s}_0,\ldots,{\bf s}_{T-N_2^{L+1}-1}\right)=0 \nonumber \\
    &~~~~~~~~~~~~~~~~~~~~~~~~~~~~~~\vdots
\end{align}

Using the chain rule we have the following for each $j\in\mathbb{N}$
\begin{align}
    H\left({\bf s}_0,{\bf s}_1,\ldots,{\bf s}_{T-N_2^{L+1}+(j-1)(T-N_2^{L+1}+1)}\given[\Big]  \left\{{\bf x}^{(r_0)}_{m\cdot(T-N_2^{L+1}+1)},{\bf x}^{(r_0)}_{1+m\cdot(T-N_2^{L+1}+1)},
    \ldots, {\bf x}^{(r_0)}_{T-N_1-N_2^{L+1}+m\cdot(T-N_2^{L+1}+1)}\right\}_{m=0}^j \right)
    =0.
    \label{eq:outcomeFirst}
\end{align}

Alternatively we note that for all $q\in\mathbb{Z}_+$,
\begin{align}
    &\left| \left\{q,q+1,\dots,q+T-N_2^{L+1}\right\}\bigcap\right. \nonumber \\
    & \left. \left\{m\cdot(T-N_2^{L+1}+1),1+m\cdot(T-N_2^{L+1}+1)\ldots,T-N_1-N_2^{L+1}+m\cdot(T-N_2^{L+1}+1)\right\}_{m=0}^j \right| \nonumber \\
    &=T-N_1-N_2^{L+1}+1.
    \label{eq:countFirst}
\end{align}
Hence, (\ref{eq:outcomeFirst}) also follows from (\ref{eq:entropyFirstSegment}), (\ref{eq:countFirst}) and the chain rule.

Therefore, following the arguments in \cite{fong2018optimal} which we recall here and in Appendix~\ref{sec:app1}, we note that the \eqref{eq:outcomeFirst} involves $j(T-N_2^{L+1}+1)$ source messages and $(j+1)(T-N_1^{L+1}+1)$ source packets. Therefore, the $(N_1,\ldots,N_{L+1})$-achievable $(n_1,\ldots,n_{L+1}k,T)_{\mathbb{F}}$-streaming code restricted to channel $(r_0,r_1)$ can be viewed as a point-to-point streaming code with rate $k/n_1$ and delay $T-N_2^{L+1}$ which can correct any $N_1$ erasures. In particular the point-to-point code can correct the periodic erasure sequence $\tilde{e}^{\infty}$ depicted in Figure~\ref{fig:Periodic1}, which is formally defined as
\begin{align}
    \tilde{e}_i=\begin{cases}
    0~~~{\rm if}~0\leq i\mod(T-N_2^{L+1}+1)\leq T-N_1^{L+1} \\
    1~~~{\rm otherwise}
    \end{cases}
\end{align}
\begin{figure}[h]
    \begin{center}
        \begin{psfrags}
            \psfrag{d1}[][][1]{$\ldots$}
            \psfrag{a}[][][1]{$T-N_1^{L+1}+1$}
            \psfrag{b}[][][1]{$N_1$}
            \includegraphics[width=0.5\columnwidth]{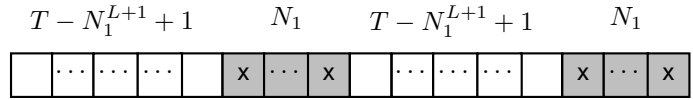}
        \end{psfrags}
    \end{center}
    \caption{A periodic erasure sequence with period $T-N_2^{L+1}+1$.}
    \label{fig:Periodic1}
\end{figure}

By standard arguments which are rigorously elaborated in Appendix~\ref{sec:app1}, we conclude that
\begin{align}
    \frac{k}{n_1}&\leq \frac{T-\sum_{l=1}^{L+1} N_l+1}{T-\sum_{l=2}^{L+1} N_l +1}\nonumber \\
    &=C_{T-\sum_{l=2}^{L+1} N_l,N_1}.
    \label{eq:rateOfFirst}
\end{align}

{\bf Last Segment (link $(r_{L},r_{L+1})$):}

In addition, for every $i\in\mathbb{Z}_+$, message ${\bf s}_i$ has to be perfectly recovered from
\begin{align}
    \left\{{\bf y}^{(r_{L+1})}_{i+N_{1}^{L}},{\bf y}^{(r_{L+1})}_{i+N_{1}^{L}+1},\ldots,{\bf y}^{(r_{L+1})}_{i+T}\right\}
\end{align}
by node $r_{L+1}$ given that ${\bf s}_0,{\bf s}_1,\ldots,{\bf s}_{i-1}$ have been correctly decoded by node $r_{L+1}$, or otherwise a length $N_1$ burst erasure from time $i$ to $i+N_1-1$ induced on channel $(r_0,r_1)$ followed by a length $N_2$ burst erasure from time $i+N_1$ to $i+N_1+N_2-1$ induced on channel $(r_1,r_2)$ and so on until a length $N_L$ burst erasure from time $i+\sum_{l=1}^{L-1}N_l$ to $i+\sum_{l=1}^{L}N_l-1$ induced on channel $(r_{L-1},r_L)$ would result in a decoding failure for node $r_{L+1}$.

It follows that
\begin{align}
    H\left({\bf s}_i\given[\Big]\left\{{\bf x}_{i+N_1^{L}}^{(r_{L})},{\bf x}_{i+N_1^{L}+1}^{(r_{L})},\ldots,{\bf x}_{i+T}^{(r_L)} \right\}\setminus  \left\{{\bf x}_{\theta_1},\ldots,{\bf x}_{\theta_{N_{L+1}}}\right\},{\bf s}_0,\ldots,{\bf s}_{i-1}\right)=0
    \label{eq:LastBasic}
\end{align}

for any $i\in\mathbb{Z}_+$ and $N_{L+1}$ non-negative integers denoted by $\theta_1,\ldots,\theta_{N_{L+1}}$. We note that for all $q\in\mathbb{Z}_+$,
\begin{align}
    &\left| \left\{q+N_1^{L},q+N_1^{L}+1,\dots,q+T\right\}\bigcap\right. \nonumber \\
    & \left. \left\{N_1^{L}+N_{L+1}+m\cdot(T-N_1^{L}+1),N_1^{L}+N_{L+1}+1+m\cdot(T-N_1^{L}+1)\ldots,T+m\cdot(T-N_1^{L}+1)\right\}_{m=0}^j \right| \nonumber \\
    &=T-N_1^{L}-N_{L+1}+1
    \label{eq:countLast}
\end{align}
Using (\ref{eq:LastBasic}), (\ref{eq:countLast}) and the chain rule, we have
\begin{align}
    &H\left({\bf s}_0,\ldots,{\bf s}_{T-N_1^{L}+(j-1)(T-N_1^{L}+1)}\given[\Big]  \left\{{\bf x}^{(r_{L})}_{N_1^{L}+N_{L+1}+m\cdot(T-N_1^{L}+1)},{\bf x}^{(r_{L})}_{N_1^{L}+N_{L+1}+1+m\cdot(T-N_1^{L}+1)},
    \ldots, {\bf x}^{(r_{L})}_{T+m\cdot(T-N_1^{L}+1)}\right\}_{m=0}^j \right) \nonumber \\
    &=0.
    \label{eq:outcomeLast}
\end{align}

Therefore, following the arguments in \cite{fong2018optimal} which we recall here and in Appendix~\ref{sec:app1}, we note that the \eqref{eq:outcomeLast} involves $j(T-N_1^{L}+1)$ source messages and $(j+1)(T-N_1^{L+1}+1)$ source packets. Therefore, the $(N_1,\ldots,N_{L+1})$-achievable $(n_1,\ldots,n_{L+1}k,T)_{\mathbb{F}}$-streaming code restricted to channel $(r_L,r_{L+1})$ can be viewed as a point-to-point streaming code with rate $k/n_{L+1}$ and delay $T-N_1^{L}$ which can correct any $N_{L+1}$ erasures. In particular the point-to-point can correct the periodic erasure sequence $\hat{e}^{\infty}$ depicted in Figure~\ref{fig:Periodic2}, which is formally defined as
\begin{align}
    \tilde{e}_i=\begin{cases}
    0~~~{\rm if}~0\leq i\mod(T-N_1^{L}+1)\leq T-N_1^{L+1} \\
    1~~~{\rm otherwise}
    \end{cases}
\end{align}
\begin{figure}[h]
    \begin{center}
        \begin{psfrags}
            \psfrag{d1}[][][1]{$\ldots$}
            \psfrag{a}[][][1]{$T-N_1^{L+1}+1$}
            \psfrag{b}[][][1]{$N_{L+1}$}
            \includegraphics[width=0.5\columnwidth]{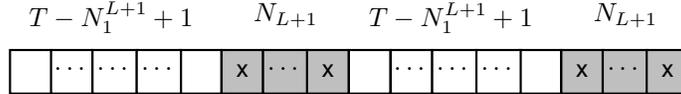}
        \end{psfrags}
    \end{center}
    \caption{A periodic erasure sequence with period $T-N_1^{L}+1$.}
    \label{fig:Periodic2}
\end{figure}

By standard arguments which are rigorously elaborated in Appendix~\ref{sec:app1}, we conclude that
\begin{align}
    \frac{k}{n_{L+1}}&\leq \frac{T-\sum_{l=1}^{L+1} N_l+1}{T-\sum_{l=1}^L N_l +1} \nonumber \\
    &=C_{T-\sum_{l=1}^L N_l,N_{L+1}}.
    \label{eq:rateOfLast}
\end{align}

{\bf The $j$'th Segment (link $(r_{j-1},r_j)$):}

Now, when considering a $j$'th segment (the channel $(r_{j-1},r_j)$), we show again that any $(N_1,\ldots,N_{L+1})$ achievable $(n_1,\ldots n_{L+1},k,T)_{\mathbb{F}}$ code restricted to the channel $(r_{j-1},r_{j})$ can be viewed as a point-to-point code which should handle any $N_{J}$ erasures with a delay which we define next.

Combining the arguments given above we note that first, for every $i\in\mathbb{Z}_+$, message ${\bf s}_i$ has to be perfectly recovered by node $r_{j}$ by time $i+T-\sum_{l=j+1}^{L+1} N_l$ given that ${\bf s}_0,{\bf s}_1,\ldots,{\bf s}_{i-1}$ have been correctly decoded by node $r_{j}$, or otherwise a length $N_{j+1}$ burst from time $i+T-\sum_{l=j+1}^{L+1} N_l+1$ to $i+T-\sum_{l=j+2}^{L+1} N_l$ introduced on channel $(r_{j},r_{j+1})$, followed by a length $N_{j+2}$ burst from time $i+T-\sum_{l=j+2}^{L+1} N_l+1$ to $i+T-\sum_{l=j+3}^{L+1} N_l$ introduced on channel $(r_{j+1},r_{j+2})$ and so on, up to a length $N_{L+1}$ burst from time $i-N_{L+1}+1$ to $i+T$ introduced on channel $(r_{L},r_{L+1})$ would result in a decoding failure for node $r_j$ and all the nodes up to the destination $r_{L+1}$.

Further, in addition, for every $i\in\mathbb{Z}_+$, message ${\bf s}_i$ has to be perfectly recovered from
\begin{align}
\left({\bf y}^{(r_{j})}_{i+\sum_{l=1}^{j-1}N_l},{\bf y}^{(r_{j})}_{i+\sum_{l=1}^{j-1}N_l+1},\ldots,{\bf y}^{(r_{j})}_{i+T-\sum_{l=j+1}^{L+1} N_l}\right)
\end{align}
by node $r_j$ given that ${\bf s}_0,{\bf s}_1,\ldots,{\bf s}_{i-1}$ have been correctly decoded by node $r_j$, or otherwise a length $N_1$ burst erasure from time $i$ to $i+N_1-1$ induced on channel $(r_0,r_1)$ followed by a length $N_2$ burst erasure from time $i+N_1$ to $i+N_1+N_2-1$ induced on channel $(r_1,r_2)$ and so on (up to a burst erasure $N_{j-1}$ from time $i+\sum_{l=1}^{j-2}N_l$ to $i+\sum_{l=1}^{j-1}N_l-1$ induced on channel $(r_{j-2}, r_{j-1})$) would result in a decoding failure for node $r_j$. These constraints are depicted in Figure~\ref{fig:segmentJ}.

Recalling that $N_1^{j-1}=\sum_{l=1}^{j-1}N_l$, and $N_{j+1}^{L+1}=\sum_{l=j+1}^{L+1}N_l$ it follows that
\begin{align}
    H\left({\bf s}_i\given[\Big]\left\{{\bf x}_{i+N_1^{j-1}}^{(r_{j-1})},{\bf x}_{i+N_1^{j-1}+1}^{(r_{j-1})},\ldots,{\bf x}_{i+T-N_{j+1}^{L+1}}^{(r_{j-1})} \right\}\setminus \right.  \left. \left\{{\bf x}_{\theta_1},\ldots,{\bf x}_{\theta_{N_j}}\right\},{\bf s}_0,\ldots,{\bf s}_{i-1}\right)=0
    \label{eq:middleBasic}
\end{align}

for any $i\in\mathbb{Z}_+$ and $N_j$ non-negative integers denoted by $\theta_1,\ldots,\theta_{N_j}$. We note that for all $q\in\mathbb{Z}_+$,
\begin{align}
    &\given[\Big] \left\{q+N_1^{j-1},q+N_1^{j-1}+1,\dots,q+T-N_{j+1}^{L+1}\right\}\bigcap \left\{N_1^{j-1}+N_j+m\cdot(T-N_1^{j-1}-N_{j+1}^{L+1}+1),\right. \nonumber \\
    & \left.
    ~~~~N_1^{j-1}+N_j+1+m\cdot(T-N_1^{j-1}+N_{j+1}^{L+1}+1),\ldots,T-N_{j+1}^{L+1}+m\cdot(T-N_1^{j-1}-N_{j+1}^{L+1}+1)\right\}_{m=0}^j \given[\Big] \nonumber \\
    & =T-N_1^{j-1}-N_j-N_{j+1}^{L+1}+1
    \label{eq:countMiddle}
\end{align}

Using (\ref{eq:middleBasic}), (\ref{eq:countMiddle}) and the chain rule, we have
\begin{align}
    &H\left({\bf s}_0,\ldots,{\bf s}_{T-N_1^{j-1}-N_{j+1}^{L+1}+(j-1)(T-N_1^{j-1}-N_{j+1}^{L+1}+1)}\given[\Big] \right.  \nonumber \\
    &\left.\left\{{\bf x}^{(r_{L})}_{N_1^{j-1}+N_{j}+m\cdot(T-N_1^{j-1}-N_{j+1}^{L+1}+1)},{\bf x}^{(r_{L})}_{N_1^{j-1}+N_{j}+1+m\cdot(T-N_1^{j-1}-N_{j+1}^{L+1}+1)},
    \ldots, {\bf x}^{(r_{L})}_{T-N_{j+1}^{L+1}+m\cdot(T-N_1^{j-1}-N_{j+1}^{L+1}+1)}\right\}_{m=0}^j \right) \nonumber \\
    &=0.
    \label{eq:outcomeMid}
\end{align}

Therefore, following the arguments in \cite{fong2018optimal} which we recall here and in Appendix~\ref{sec:app1}, we note that the \eqref{eq:outcomeMid} involves $j(T-N_1^{j-1}-N_{j+1}^{L+1}+1)$ source messages and $(j+1)(T-N_1^{L+1}+1)$ source packets. Therefore, the $(N_1,\ldots,N_{L+1})$-achievable $(n_1,\ldots,n_{L+1}k,T)_{\mathbb{F}}$-streaming code restricted to channel $(r_{j-1},r_{j})$ can be viewed as a point-to-point streaming code with rate $k/n_{j}$ and delay $T-N_1^{j-1}-N_{j+1}^{L+1}$ which can correct any $N_{j}$ erasures. In particular the point-to-point can correct the periodic erasure sequence $\breve{e}^{\infty}$ depicted in Figure~\ref{fig:Periodic3}, which is formally defined as
\begin{align}
    \tilde{e}_i=\begin{cases}
    0~~~{\rm if}~0\leq i\mod(T-N_1^{j-1}-N_{j+1}^{L+1}+1)\leq T-N_1^{L+1} \\
    1~~~{\rm otherwise}
    \end{cases}
\end{align}
\begin{figure}[h]
    \begin{center}
        \begin{psfrags}
            \psfrag{d1}[][][1]{$\ldots$}
            \psfrag{a}[][][1]{$T-N_1^{L+1}+1$}
            \psfrag{b}[][][1]{$N_{j}$}
            \includegraphics[width=0.5\columnwidth]{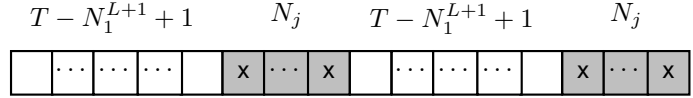}
        \end{psfrags}
    \end{center}
    \caption{A periodic erasure sequence with period $T-N_1^{j-1}-N_{j+1}^{L+1}+1$.}
    \label{fig:Periodic3}
\end{figure}
By standard arguments which are rigorously elaborated in Appendix~\ref{sec:app1}, we conclude that
\begin{align}
    \frac{k}{n_{j}}&\leq \frac{T-\sum_{l=1}^{L+1} N_l+1}{T-\sum_{l=1,l\neq j}^{L+1} N_l +1} \nonumber \\
    &=C_{T-\sum_{l=1,l\neq j}^{L+1} N_l,N_{j}}.
    \label{eq:generalRelay}
\end{align}

Therefore we have
\begin{align}
    R&\leq \frac{k}{\max_{j\in\{1,2,\ldots,L+1\}}\left\{n_j\right\}}\nonumber \\
    &=\frac{T-\sum_{l=1}^{L+1} N_l+1}{T-\min_{j}\left\{\sum_{l=1, l\neq j}^{L+1} N_l\right\}+1} \nonumber \\
    &=\min_{j} C_{T-\sum_{l=1,l\neq j}^{L+1} N_l,N_{j}}\nonumber \\
    & =C^{+}_{T,N_1,\ldots,N_{L+1}}.
\end{align}

\begin{figure}[h]
    \begin{center}
        \begin{psfrags}
            \psfrag{d1}[][][1.5]{$\ldots$}
            \psfrag{x}[][][1]{${\rm x}$}
            \psfrag{b1}[][][\fontSizeIfFig]{$\underbrace{~~~~~~~~~~~}_{N_1}$}
            \psfrag{b2}[][][\fontSizeIfFig]{$\underbrace{~~~~~~~~~~~}_{N_{j-1}}$}
            \psfrag{b3}[][][\fontSizeIfFig]{$\underbrace{~~~~~~~~~~~}_{N_{j+1}}$}
            \psfrag{b4}[][][\fontSizeIfFig]{$\underbrace{~~~~~~~~~~~~~~}_{N_{L+1}}$}
            \psfrag{c}[][][1.5]{$\ddots$}
            \psfrag{a1}[][][\fontSizeIfFig]{\fbox{$i$}}
            \psfrag{a2}[][][\fontSizeIfFig]{\fbox{$i+N_1-1$}}
            \psfrag{a3}[][][\fontSizeIfFig]{\fbox{$i+N_{1}^{j-2}$}}
            \psfrag{a4}[][][\fontSizeIfFig]{\fcolorbox{black}{gray!15}{$i+N_{1}^{j-1}-1$}}
            \psfrag{a5}[][][\fontSizeIfFig]{\fbox{$i+N_{1}^{j-1}$}}
            \psfrag{a10}[][][\fontSizeIfFig]{\fbox{$i+T$}}
            \psfrag{a9}[][][\fontSizeIfFig]{\fbox{$i+T-N_{L+1}+1$}}
            \psfrag{a8}[][][\fontSizeIfFig]{\fbox{$i+T-N_{j+2}^{L+1}$}}
            \psfrag{a7}[][][\fontSizeIfFig]{\fcolorbox{black}{gray!15}{$i+T-N_{j+1}^{L+1}+1$}}
            \psfrag{a6}[][][\fontSizeIfFig]{\fbox{$i+T-N_{j+1}^{L+1}$}}
            \psfrag{r1}[][][\fontSizeIfFig]{$\rm{Link~}(r_0,r_1)$}
            \psfrag{r2}[][][\fontSizeIfFig]{$\rm{Link~}(r_{j-2},r_{j-1})$}
            \psfrag{r3}[][][\fontSizeIfFig]{$\rm{Link~}(r_{j-1},r_{j})$}
            \psfrag{r4}[][][\fontSizeIfFig]{$\rm{Link~}(r_{j},r_{j+1})$}
            \psfrag{r5}[][][\fontSizeIfFig]{$\rm{Link~}(r_L,r_{L+1})$}
            \includegraphics[width=\twoColFigSize\columnwidth]{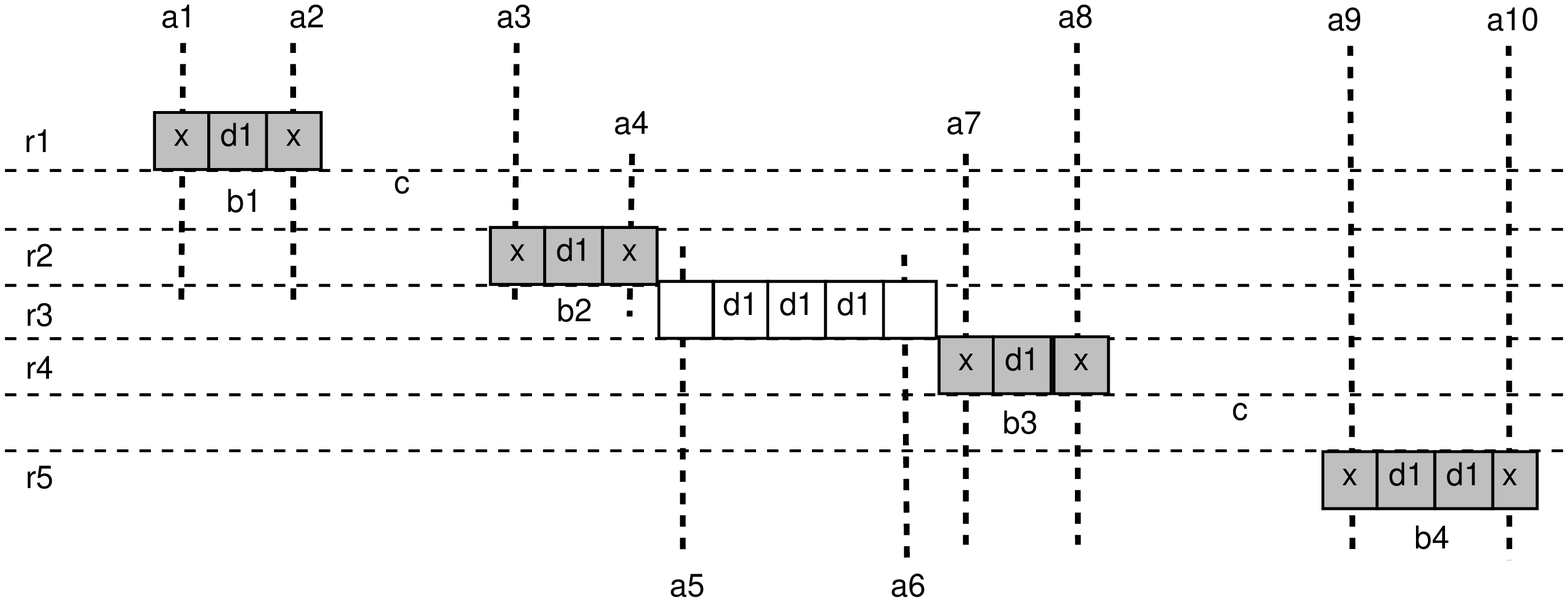}
        \end{psfrags}
    \end{center}
    \caption{Constrains imposed on transmission in link $(r_{j-1},r_{j})$.}
    \label{fig:segmentJ}
\end{figure}

\section{State-dependent symbol-wise DF}
\label{sec:codScheme}
As mentioned above, the achievable scheme we analyze is a state-dependent symbol-wise DF scheme. This scheme is composed from a block code combined with diagonal interleaving. We start with some basic definitions of point-to-point block codes which would be the basis for this scheme.
\begin{definition}
A point-to-point $(n,k,T)_{\mathbb{F}}$-block code consists of the following
\begin{enumerate}
    \item A sequence of $k$ symbols $\{u[l]\}_{l=0}^{k-1}$ where $u[l]\in\mathbb{F}$.
    \item A generator matrix $\svv{G}\in\mathbb{F}^{k\times n}$. The source codeword is generated according to
    \begin{align}
        \left[x[0]~x[1]~\ldots~x[n-1]\right]=\left[u[0]~u[1]~\ldots~u[k-1]\right]\svv{G}
    \end{align}
    \item A decoding function $\varphi_{l+T}:\mathbb{F}\cup\{*\}\times\ldots\mathbb{F}\cup\{*\}\to\mathbb{F}$ for each $l\in\{0,1,\ldots,k-1\}$, where $\varphi_{l+T}$ is used by the destination at time $\min\{l+T,n-1\}$ to estimate $u[l]$ according to
    \begin{align}
        \hat{u}[l]=\begin{cases}
        \varphi_{l+T}(y[0],y[1],\ldots,y[l+T])~~~{\rm if}~l+T\leq n-1 \\
        \varphi_{l+T}(y[0],y[1],\ldots,y[n-1])~~~{\rm if}~l+T> n-1
        \end{cases}
    \end{align}
\end{enumerate}
\end{definition}

\begin{definition}
A point-to-point $(n, k, T)_{\mathbb{F}}$-block code is said to be $N$-achievable if the following holds for any
$N$-erasure sequence $e^{\infty}\in \Omega_N$: For the $(n, k, T)_{\mathbb{F}}$-block code, we have
\begin{align}
    \hat{u}[l]=u[l]
\end{align}
for all $l\in\{0,1,\ldots,k-1\}$ and $u[l]\in\mathbb{F}$ and all $u[l]\in\mathbb{F}$, where
 \begin{align}
        \hat{u}[l]=\begin{cases}
        \varphi_{l+T}(g_1(x[0],e_0),\ldots,g_1(x[l+T],e_{l+T}))~~~{\rm if}~l+T\leq n-1 \\
        \varphi_{l+T}(g_1(x[0],e_0),\ldots,g_1(x[n-1],e_{n-1}))~~~{\rm if}~l+T> n-1
        \end{cases}
    \end{align}
with $g_1$ being the symbol-wise erasure function defined in \eqref{eq:g}.
\end{definition}

Recalling from Section~\ref{sec:converse} that the rate in channel $(r_{j},r_{j+1})$ is upper bounded by $R_{j}\leq\frac{k}{n_{j}}$ where
\begin{align}
    k&=T-\sum_{l=1}^{L+1}N_l+1 \nonumber \\
    n_{j+1}&=T-\sum_{l=1,l\neq {j+1}}^{L+1}N_l+1.
    \label{eq:k_nj}
\end{align}

Denoting 
\begin{align}
    T_{j+1}=T-\sum_{l=1,i\neq {j+1}}N_l,
    \label{eq:T_j}
\end{align}

The suggested coding scheme is composed of $(n_{j+1},k,N_1^{j}+T_j)_{\mathbb{F}}$-block codes combined with diagonal interleaving. Each relay $r_j$ is using $n_{j+1}$ codes, each one is an $(n_{j+1}, k,N_1^{j}+T_j)_{\mathbb{F}}$-block code that depends on the erasure pattern in the previous relay, i.e., each code can be different.

More formally, let $s_i[l]$ be the $l$'th symbol of the source message ${\bf s}_i$ and let $x^{(r_j)}_i[l]$ be the $l$'th symbol of the output of encoding function $f_i^{(r_j)}$. We may say for each $i\in\mathbb{Z}_+$, a single transmission function of relay $r_j$ constructs
\begin{align}
    \left[x^{(r_j)}_{i+N_1^j}[0]~x^{(r_j)}_{i+N_1^j+1}[1]~\cdots~x^{(r_j)}_{i+N_1^j+n_{j+1}-1}[n_{j+1}-1]\right] \triangleq\left[s_i[0]~s_{i+1}[1]~\cdots~s_{i+k-1}[k-1]\right]\times \svv{G}_i^{(r_j)},
    \label{eq:diagInter}
\end{align}
where we show next that $\svv{G}_i^{(r_j)}$ is a $k\times n_{j+1}$ generator matrix of a $(n_{j+1},k)$ MDS code. We assume that for any $i<0$, ${\bf s}_i={\bf 0}$.

Denoting with 
\begin{align}
    {\bf \tilde{s}}_i=\left[s_i[0]~s_{i+1}[1]~\cdots~s_{i+k-1}[k-1]\right],
    \label{eq:tilde_s}
\end{align}
an example of the diagonal interleaving of a single code is given in Table~\ref{table:SingleDiagRj} below.
\begin{table}[h]
\begin{center}
    \begin{tabular}{|c|c|c|c|c|}
        \hline 
        Time & $i+N_1^j$ & $i+N_1^j+1$ & $\cdots$ & $i+N_1^j+n_{j+1}-1$ \\
        \hline 
        & $[{\bf \tilde{s}}_i\times \svv{G}_i^{(r_j)}][0]$ & & & \\ \hline 
        & & $[{\bf \tilde{s}}_i\times \svv{G}_i^{(r_j)}][1]$ & & \\
         \hline
        & & & $\ddots$ & \\
         \hline
        & & & & $[{\bf \tilde{s}}_i\times \svv{G}_i^{(r_{j+1)}}][n_{j+1}-1]$ \\ \hline
    \end{tabular}
\end{center}
\caption{An example of a single code transmitted in $r_j$}
\label{table:SingleDiagRj}
\end{table}

Therefore, each transmitted packet at relay $r_j$ is composed of $n_{j+1}$ symbols, each of which is taken from a different code. Recalling that $[{\bf \tilde{s}}_i\times \svv{G}_i^{(r_j)}][j]$ means that we take the $j$'th element from the vector resulting from multiplying ${\bf \tilde{s}}_i$ with the generator matrix $\svv{G}_i^{(r_j)}$, an  example of a packet sent by relay $r_j$  is given in \eqref{eq:xSentAtRj} below.

\begin{equation}
    {\bf x}_{i+N_1^j}^{(r_j)}=
    \begin{bmatrix}
    x_{i+N_1^j}^{(r_j)}[0] \\
    x_{i+N_1^j}^{(r_j)}[1] \\
    \vdots \\
    x_{i+N_1^j}^{(r_j)}[n_{j+1}-1] \\
    \end{bmatrix} = 
    \begin{bmatrix}
    [{\bf \tilde{s}}_i\times \svv{G}_i^{(r_j)}][0] \\
    [{\bf \tilde{s}}_{i-1}\times \svv{G}_{i-1}^{(r_j)}][1] \\
    \vdots \\
    [{\bf \tilde{s}}_{i-n_{j+1}+1}\times \svv{G}_{i-n_{j+1}+1}^{(r_j)}][n_{j+1}-1] \\
    \end{bmatrix}
    \label{eq:xSentAtRj}
\end{equation}

We note again that the specific structure of each $\svv{G}_i^{(r_j)}$ is defined according to the erasure pattern of the previous links.


We describe next the process of generating $\svv{G}_i^{(r_j)}$ (required to transmit ${\bf{\tilde{s}}}_i$ in each one of the relays).

\begin{itemize}
    \item At the sender ($r_0$), use an $(n_1,k)$ MDS code. Hence, $\svv{G}_i^{(r_0)}$ is the generator matrix of an $(n_1,k)$ MDS code.
    \item Each encoder at relay $r_j$ ($j\in\{1,\ldots,L\}$) performs the following
    \begin{enumerate}
        \item Store any non-erased symbols from the first $N_{j}$ received symbols from link $(r_{j-1},r_{j})$, i.e., all non-erased symbols from  
        \begin{align}
            \left\{x_{i+{N_1}^{j-1}}^{(r_{j-1})}[0],\ldots,x_{i+{N_1}^{j-1}+N_j-1}^{(r_{j-1})}[N_j-1]\right\}.
        \end{align}
        \item Start transmitting at time $i+N_1^j$ (while continuing to store the received symbols from link $(r_{j-1},r_{j})$). Until time $i+N_1^j+k-2$, forward the $k-1$ symbols received from link $(r_{j-1},r_{j})$ by the order they were received, i.e., forward the $k-1$ non-erased symbols from  
        \begin{align}
            \left\{x_{i+{N_1}^{j-1}}^{(r_{j-1})}[0],\ldots,x_{i+{N_1}^{j-1}+N_j+k-2}^{(r_{j-1})}[N_j+k-2]\right\}.
        \end{align}
        Noting the $N_j+k-2=n_j-1$ means that relay forwards the $k-1$ symbols received by the order they were received, i.e., all non-erased symbols from  
        \begin{align}
            \left\{x_{i+{N_1}^{j-1}}^{(r_{j-1})}[0],\ldots,x_{i+{N_1}^{j-1}+N_j-1}^{(r_{j-1})}[n_{j}-1]\right\}.
        \end{align}
        \item At time $N_1^j+k-1$, recover ${\bf {\tilde{s}}_i}$. In Lemma~\ref{lem:lem1} below we prove that it is feasible for any $N_1,\ldots,N_{L+1}$-erasure sequence.
        \item Transmit until time $N_1^j+n_{j+1}-1$ encoded symbols. The encoded symbols should be non-received symbols from $(n_{max},k)$ MDS code to be defined below.
        \item For each transmitted symbol, attach a header which will be defined next.
    \end{enumerate}
\end{itemize}



Recalling the definition of $n_{\max}$ \eqref{eq:Tj}, the following Proposition sheds light on the suggested method of encoding $\bf{\tilde{s}}_i$ once it is recovered at $r_j$ (step (4)) in order to guarantee that each $\svv{G}^{(r_j)}_i$ is a generator matrix of $(n_{j+1},k)$ MDS code and further simplify the header required to allow decoding.


\begin{proposition}
All block codes used by nodes $r_j$ where $j\in\{0,\ldots,L\}$ to transmit ${\bf {\tilde{S}}}_i$ can be generated by puncturing and applying a permutation to the $(n_{\max},k)$ MDS code which is associated with rate $C^{+}_{T,N_1,\ldots,N_{L+1}}$.
\label{prop:prop1}
\end{proposition}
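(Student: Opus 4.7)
The plan is to construct an explicit $(n_{\max},k)$ MDS code $\mathcal{C}^\ast$ and then show, by induction along the relay chain, that every per-relay generator matrix $\svv{G}_i^{(r_j)}$ can be realized as the submatrix of the generator matrix $\svv{G}^\ast$ of $\mathcal{C}^\ast$ built from an ordered selection of $n_{j+1}$ of its columns, which is precisely a puncturing-plus-permutation of $\mathcal{C}^\ast$. Since the assumption $|\mathbb{F}|\ge n_{\max}$ is in force, I would take $\mathcal{C}^\ast$ to be a Reed-Solomon code: fix $n_{\max}$ distinct evaluation points $\alpha_0,\ldots,\alpha_{n_{\max}-1}\in\mathbb{F}$ and let $\svv{G}^\ast = [\svv{g}_0\mid\svv{g}_1\mid\cdots\mid\svv{g}_{n_{\max}-1}]$ with $\svv{g}_l=(1,\alpha_l,\alpha_l^2,\ldots,\alpha_l^{k-1})^T$. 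Any $k$ of these columns form a Vandermonde matrix with distinct base points and are therefore linearly independent, so any submatrix built from $n\in\{k,\ldots,n_{\max}\}$ distinct columns of $\svv{G}^\ast$, in any order, generates an $(n,k)$ MDS code.

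The next step is the induction on $j$. At the source $r_0$, I would let $\svv{G}_i^{(r_0)}$ be the submatrix of $\svv{G}^\ast$ given by any chosen ordered selection of $n_1$ columns; this is a length-$n_1$ puncturing-permutation of $\mathcal{C}^\ast$. The inductive hypothesis for $j\ge 1$ is that the packet transmitted by $r_{j-1}$ consists of evaluations of the message polynomial associated with $\tilde{\svv{s}}_i$ at an ordered subset of $n_j$ of the points $\alpha_0,\ldots,\alpha_{n_{\max}-1}$, with the index of each used point carried by the header. Using $n_j-N_j=k$, relay $r_j$ collects at least $k$ non-erased such evaluations by time $i+N_1^j+k-1$ and, by the MDS property, recovers the polynomial at that time. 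Its length-$n_{j+1}$ output is then assembled in two phases: the $k-1$ forwarded non-erased symbols are evaluations at $k-1$ of the already-used points $\alpha_l$ (indices inherited from the received header), while each of the $n_{j+1}-k+1$ freshly synthesized symbols is computed by evaluating the recovered polynomial at a fresh index in $\{0,\ldots,n_{\max}-1\}$ not already used in this packet. The resulting $n_{j+1}$ coordinates form an ordered subset of columns of $\svv{G}^\ast$, so $\svv{G}_i^{(r_j)}$ is again a puncturing-plus-permutation of $\mathcal{C}^\ast$, closing the induction.

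The main obstacle is verifying that the fresh-index pool is never exhausted, i.e., that at each relay there remain enough unused indices in $\{0,\ldots,n_{\max}-1\}$ to fill the $n_{j+1}-k+1$ newly synthesized slots after accounting for the $k-1$ indices already spoken for by the forwarded symbols. This reduces to $n_{j+1}\le n_{\max}$, which holds by the definition of $n_{\max}$ in \eqref{eq:Tj}. Once this bookkeeping is in place, the proposition follows by construction, since each $\svv{G}_i^{(r_j)}$ was built explicitly as the submatrix of $\svv{G}^\ast$ indexed by the ordered subset emitted by $r_j$, which is exactly puncturing of $\mathcal{C}^\ast$ to those coordinates followed by a permutation.
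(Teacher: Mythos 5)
Your proof is correct and takes essentially the same route as the paper: each per-relay code is realized as an ordered selection of $n_{j+1}\le n_{\max}$ distinct columns of one fixed $(n_{\max},k)$ MDS generator matrix (forwarded symbols keep their column indices via the header, freshly synthesized symbols take columns not yet used in the packet), which is exactly puncturing followed by a permutation. Your explicit Reed--Solomon/Vandermonde instantiation and the induction along the relay chain merely make concrete what the paper argues briefly through $\svv{G}_{\max}$ and defers in part to its Lemma~\ref{lem:lem1}.
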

This proposition holds since $k$ (the number of information symbols) is the same for all codes. Denoting $N_{\max}=\max_j{N_j}$, we recall that the MDS code $(n_{\max},k)$ which is the MDS code associated with rate $C^{+}_{T,N_1,\ldots,N_{L+1}}$ can correct any $N_{max}$. Puncturing any $N_{\max}-N_j$ columns from the generator matrix of this code results with a code that can correct any $N_j$ erasures.

In fact, denoting with $\svv{G}_{\max}$ the generator matrix of $(n_{\max},k)$ MDS code, $\svv{G}_i^{(r_j)}$ can be viewed as taking $n_{j+1}$ columns from $\svv{G}_{\max}$ and apply permutation on the order of the columns. The specific columns taken from $\svv{G}_{\max}$ and their order is defined by the specific erasure pattern which occurs. 

Further, following Proposition~\ref{prop:prop1} we define the header as a number which indicates the location of the column from $\svv{G}_{\max}$ that was used to generate this symbol. Thus, the header attached to each symbol transmitted from each relay is a number in the range $\left[1,\ldots,n_{\max}\right]$.

Thus we have the following Lemma.
\begin{lemma}
For any $N_1,\ldots,N_{L+1}$-erasure sequence, any $j\in\{0,\ldots,L\}$ and any $i\in \mathbb{Z}_+$, $\svv{G}_i^{(r_j)}$ is a generator matrix of $(n_{j+1},k)$ MDS code. 
\label{lem:lem1}
\end{lemma}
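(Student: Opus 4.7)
The plan is to prove the lemma by induction on the relay index $j\in\{0,1,\ldots,L\}$, using Proposition~\ref{prop:prop1} to reduce the MDS property of $\svv{G}_i^{(r_j)}$ to a column-selection statement inside the master generator matrix $\svv{G}_{\max}$ of the $(n_{\max},k)$ code. The base case $j=0$ is immediate, because the source directly picks $\svv{G}_i^{(r_0)}$ to be the generator matrix of an $(n_1,k)$ MDS code, i.e.\ any $n_1$ columns of $\svv{G}_{\max}$.

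For the inductive step at $j\geq 1$, assume $\svv{G}_i^{(r_{j-1})}$ is an $(n_j,k)$ MDS generator matrix. I would first verify that the relay $r_j$ really can recover $\tilde{\bf s}_i$ by time $i+N_1^j+k-1$, as promised in step (3) of the encoder description. From (\ref{eq:k_nj}), $n_j=N_j+k$, so over the window $[i+N_1^{j-1},\,i+N_1^j+k-1]$ the relay receives exactly $n_j$ packets from link $(r_{j-1},r_j)$. Since that link erases at most $N_j$ of them, at least $k$ non-erased packets arrive, and by the diagonal interleaving relation (\ref{eq:diagInter}) each such packet supplies one distinct column of $\svv{G}_i^{(r_{j-1})}\tilde{\bf s}_i$. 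The inductive hypothesis, namely that any $k$ columns of an $(n_j,k)$ MDS generator are linearly independent, then uniquely determines $\tilde{\bf s}_i$.

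Second, I would establish the MDS property of $\svv{G}_i^{(r_j)}$ through a column-accounting argument in $\svv{G}_{\max}$. By Proposition~\ref{prop:prop1}, every symbol transmitted by $r_{j-1}$ corresponds to a specific column of $\svv{G}_{\max}$, identified through its header. The first $k-1$ columns of $\svv{G}_i^{(r_j)}$ (from step (2) of the encoder) are verbatim forwards of $k-1$ distinct non-erased received symbols, hence are $k-1$ distinct columns of $\svv{G}_{\max}$. The remaining $n_{j+1}-k+1$ columns (step (4)) are selected by the relay from columns of $\svv{G}_{\max}$ that were not forwarded; this selection is feasible because $n_{j+1}\leq n_{\max}$. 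Concatenating the two groups, $\svv{G}_i^{(r_j)}$ consists of $n_{j+1}$ distinct columns of $\svv{G}_{\max}$, and since $\svv{G}_{\max}$ is itself an $(n_{\max},k)$ MDS generator any subset of $n_{j+1}$ of its columns is the generator of an $(n_{j+1},k)$ MDS code.

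The main obstacle is the second-stage bookkeeping: one must confirm that the $k-1$ forwarded columns and the $n_{j+1}-k+1$ newly generated columns are jointly distinct and cover exactly the entries the destination expects to see according to the header. This reduces to the combinatorial inequality $n_{j+1}\leq n_{\max}$, which follows directly from the definition (\ref{eq:Tj}), and to the observation that headers allow the relay to keep an exact inventory of which columns of $\svv{G}_{\max}$ remain unused after the forwarding prefix.
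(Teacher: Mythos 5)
Your proof is correct and follows essentially the same route as the paper's: induction on the relay index, using the MDS property of $\svv{G}_i^{(r_{j-1})}$ together with the at-most-$N_j$ erasure guarantee to show $\tilde{\bf s}_i$ is recoverable by time $i+N_1^j+k-1$, and then viewing $\svv{G}_i^{(r_j)}$ as a selection of distinct columns of $\svv{G}_{\max}$ (via Proposition~\ref{prop:prop1} and the headers) to conclude it is $(n_{j+1},k)$ MDS. Your column count of $n_{j+1}-k+1$ symbols after recovery is in fact the one consistent with steps (2)--(4) of the encoder description, and your explicit distinctness bookkeeping only makes the paper's final step more precise.
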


\begin{proof}
From the construction, $\svv{G}_i^{(r_0)}$ is an $(n_1,k)$ MDS code. Hence we assume by induction that $\svv{G}_i^{(r_{j-1})}$ is an $(n_{j},k)$ MDS code and show that $\svv{G}_i^{(r_{j})}$ is an $(n_{j+1},k)$ MDS code.

We note that the only non-trivial steps in generating $\svv{G}_i^{(r_j)}$ are steps (2) and (3). Assuming $N_1,\ldots,N_{L+1}$-erasure sequence means that for any $j\in\{0,\ldots,L\}$ 
\begin{align}
    e^{j,\infty}\in\Omega_{N_{j}}
\end{align}
i.e., that the maximal number of erasures in line $(r_{j-1},r_j)$ is $N_j$. Since we assumed $\svv{G}_i^{(r_{j-1})}$ is an $(n_{j},k)$ MDS code it follows that it is guaranteed that $k$ symbols out of the $n_j$ transmitted symbols will not be erased. Further, since relay $r_{j-1}$ starts sending the coded symbols at time $i+N_1^{j-1}$ and relay $r_j$ starts forwarding the non-erased symbols received from $r_{j-1}$ at time $i+N_1^{j-1}+N_j$ (after buffering any non erased symbols from the first $N_j$ coded symbols) it is guaranteed that relay $r_j$ could forward the $k-1$ non-erased coded symbols sent from $r_{j-1}$.

In step (3), relay $r_j$ needs to recover all $k$ information symbols at time $i+N_1^j+k-1$. We note that this step is feasible since, assuming $\svv{G}_i^{(r_{j-1})}$ is the generator matrix of an $(n_{j},k)$ MDS code, any of its $k$ information symbols can be recovered from any $k$ non-erased symbols. We note that relay $r_{j-1}$ transmits its code in time indices
\begin{align}
    [i+N_1^{j-1},\ldots,i+N_1^{j-1}+n_{j}-1]
\end{align}
therefore, the last symbol of this code is received at relay $r_j$ at index
\begin{align}
    i+N_1^{j-1}+n_{j}&=i+N_1^{j-1}+T-\sum_{l=1,l\neq j}^{L+1}N_l+1 \nonumber \\
    & = i+N_1^{j-1}+N_j+(T-\sum_{l=1}^{L+1}N_l+1)  \nonumber \\
    & = i+N_1^{j}+k.
\end{align}

These regions are depicted in Figure~\ref{fig:transScheme} below.
\begin{figure}[h]
    \begin{center}
        \begin{psfrags}
            \psfrag{d1}[][][1.5]{$\ldots$}
            \psfrag{b1}[][][1]{$n_{j+1}$}
            \psfrag{b2}[][][1]{$n_{j}$}
            \psfrag{b3}[][][1]{$n_{j}-N_j=k$}
            \psfrag{a1}[][][1]{\fbox{$i+N_1^{j}$}}
            \psfrag{a2}[][][1]{\fbox{$i+N_1^{j-1}+n_j=i+N_1^{j}+k$}}
            \psfrag{a3}[][][1]{\fbox{$i+N_1^{j-1}$}}
            \psfrag{a4}[][][1]{\fbox{$i+N_1^{j}+n_{j+1}$}}
            \psfrag{c1}[][][1]{$(r_{j-1},r_j)$}
            \psfrag{c2}[][][1]{$(r_{j},r_{j+1})$}
            \includegraphics[width=0.5\columnwidth]{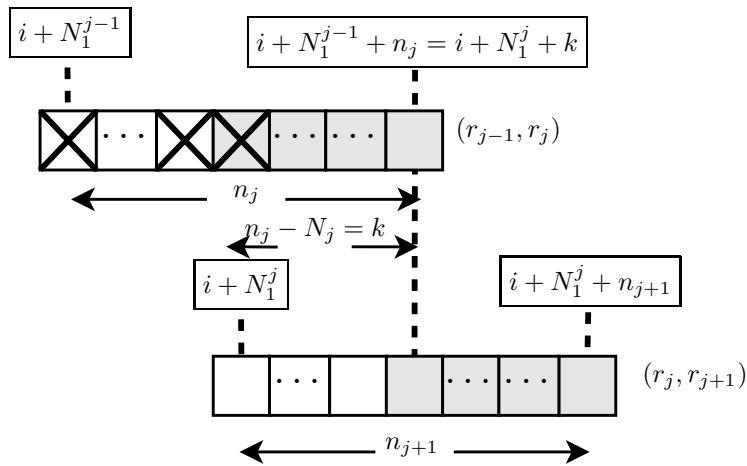}
        \end{psfrags}
    \end{center}
    \caption{Two regions of transmission in link $(r_j,r_{j+1})$. The symbols with white background are symbols forwarded from link $(r_{j-1},r_{j})$. The shaded symbols are transmitted after the $k$ information symbols are decoded hence they are either symbols erased in link $(r_{j-1},r_{j})$ or additional (independent) linear combinations of the information symbols.}
    \label{fig:transScheme}
\end{figure}

Since we assumed $\svv{G}_i^{(r_{j-1})}$ is an $(n_{j},k)$ MDS code, it follows its $k$ information symbols can be recovered from any $k$ symbols received. However, in order to recover the information symbols, relay $r_j$ needs to know the structure of $\svv{G}^{(r_{j-1})}$. Using the header attached to each received symbol relay $r_j$ it is guaranteed that relay $r_j$ could recover (all $k$) information symbols at time $i+N_1^j+k-1$. 

Further, since after the recovery of the $k$ symbols, $r_j$ adds $n_{j+1}-k$ unique columns from $\svv{G}_{\max}$ the process described above results with $\svv{G}^{(r_{j})}$ which is the generator matrix of $(n_{j+1},k)$ MDS code which can recover any $N_{j+1}$ erasures. 


\end{proof}

For specific examples on the modification of the block codes as a function of rate change between links, see Appendix~\ref{app:rateChangeInRelay}.

Next we show the following lemma.
\begin{lemma}[Based on Lemma 3 in \cite{fong2018optimal}]
Suppose $T\geq N$, and let $k\triangleq T-N+1$ and $n\triangleq k+N$. For any $\mathbb{F}$ such that $\mathbb{F}\geq n$, there exists an $N$-achievable point-to-point $(n,k,T)_{\mathbb{F}}$-block code.
\end{lemma}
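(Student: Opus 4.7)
The plan is to take the code to be a systematic $(n,k)$ maximum-distance separable code. Since $|\mathbb{F}|\geq n$, such a code exists: choose a parity matrix $\svv{V}^{k\times N}$ so that every $k$ columns of $\svv{G}=[\svv{I}_k~\svv{V}^{k\times N}]\in\mathbb{F}^{k\times n}$ are linearly independent (a standard Reed--Solomon construction suffices by the result cited as \cite{macwilliams1977theory}). The encoding is then $[x[0]~\cdots~x[n-1]]=[u[0]~\cdots~u[k-1]]\svv{G}$, and the decoding function $\varphi_{l+T}$ is taken to be the standard MDS erasure decoder, namely: collect all non-erased coordinates available at the deadline, and invert any $k\times k$ submatrix of $\svv{G}$ indexed by $k$ of them.

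The key observation that drives $N$-achievability is the arithmetic identity $n=k+N=T+1$. Under this identity, for every $l\in\{0,\ldots,k-1\}$ we have $l+T\geq T=n-1$, so $\min\{l+T,n-1\}=n-1$. In other words, every information symbol has its deadline at time $n-1$, at which point the decoder has access to the full received block $(y[0],\ldots,y[n-1])$. This is precisely the role of the second branch in the definition of $\varphi_{l+T}$.

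Given any erasure sequence $e^{\infty}\in\Omega_N$, at most $N$ of the $n$ coordinates $(y[0],\ldots,y[n-1])$ are erased, so at least $n-N=k$ of them equal the corresponding $x[\cdot]$. By the MDS property of $\svv{G}$, the $k\times k$ submatrix formed by any such $k$ non-erased columns is invertible and hence determines $(u[0],\ldots,u[k-1])$ uniquely, giving $\hat u[l]=u[l]$ for every $l$. I expect no serious obstacle in executing this plan: the lemma amounts to pairing the classical existence of MDS codes over a field of size at least $n$ with the fact that the block length $n=T+1$ is exactly matched to the delay budget $T$, so the natural ``decode at the end of the block'' strategy never violates a deadline.
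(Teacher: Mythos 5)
Your proposal is correct and follows essentially the same route as the paper: instantiate a systematic $(n,k)$ MDS code (which exists since $|\mathbb{F}|\geq n$), note that $n=k+N=T+1$ so every symbol's deadline coincides with the end of the block, and invoke the MDS property to decode from the at least $k$ non-erased coordinates. The paper's own proof is just a terser statement of this same observation.
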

\begin{proof}
The proof follows directly from the definitions of MDS code. Any $(n,k)$ MDS code is $(n,k,n-1)$ block code. Thus, any $(n,k)$ MDS code is $(n-k)$-achievable where all symbols can be decoded by the end of the code block.
\end{proof}

Recalling that when transmitting ${\bf {\tilde s}}_i$, relay $r_j$ starts its transmission at time $i+N_1^{j}$ we have the following corollary.
\begin{corollary}
Recalling that relay $r_j$ (for any $j\in\{0,\ldots,L\}$) starts transmitting the coded symbols of ${\bf\tilde{s}}_i$ at time $i+N_1^{j}$, it follows that for any $N_1,\ldots,N_{L+1}$-erasure sequence the code used in each relay $r_j$  to transmit ${\bf {\tilde s}}_i$ is $N_{j+1}$-achievable $(n_{j+1},k,N_1^{j}+T_{j+1})_{\mathbb{F}}$ point-to-point block code, i.e., all the symbols of ${\bf {\tilde s}}_i$ can be decoded at relay $r_{j+1}$ by delay of 
\begin{align}
    i+N_1^{j}+T_{j+1}&=i+N_1^{j}+T-\sum_{l=1,l\neq j+1}^{L+1}N_l \nonumber \\
    & = i+T-\sum_{l=j+2}^{L+1}N_l.
\end{align}
\label{col:col2}
\end{corollary}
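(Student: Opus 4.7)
The plan is to chain Lemma~\ref{lem:lem1} with the MDS block-code lemma recalled immediately above, and finish with a short delay computation. Nothing deeper than bookkeeping is required once Lemma~\ref{lem:lem1} is in hand, so I would structure the argument around explicitly converting block-code parameters into streaming-time parameters.

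First, invoke Lemma~\ref{lem:lem1}: for every $i\in\mathbb{Z}_+$ and every admissible erasure history on the preceding links $(r_0,r_1),\ldots,(r_{j-1},r_j)$, the matrix $\svv{G}_i^{(r_j)}$ is the generator matrix of an $(n_{j+1},k)$ MDS code. From \eqref{eq:k_nj} we have $n_{j+1}-k=N_{j+1}$, so this MDS code carries exactly $N_{j+1}$ parity coordinates and can correct any $N_{j+1}$ erasures. The field-size hypothesis $|\mathbb{F}|\geq n_{\max}\geq n_{j+1}$ is what makes existence consistent with the recalled Lemma~3 of \cite{fong2018optimal}.

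Second, apply that recalled lemma with parameters $n=n_{j+1}$, $k=k$, $N=N_{j+1}$, and $T=n_{j+1}-1=T_{j+1}$. This certifies that the code built from $\svv{G}_i^{(r_j)}$, viewed as an $(n_{j+1},k,T_{j+1})_{\mathbb{F}}$ point-to-point block code, is $N_{j+1}$-achievable, with every information symbol decoded no later than the last transmitted symbol of the block. Then translate back to the streaming timeline on link $(r_j,r_{j+1})$: by \eqref{eq:diagInter}--\eqref{eq:xSentAtRj}, the $n_{j+1}$ coded symbols associated with $\tilde{\bf s}_i$ occupy the contiguous slots $i+N_1^j,\,i+N_1^j+1,\ldots,\,i+N_1^j+n_{j+1}-1$ on this link, and each carries the header that identifies the permutation/puncturing of $\svv{G}_{\max}$ used to produce it, so $r_{j+1}$ knows which generator matrix to invert. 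Since this link introduces at most $N_{j+1}$ erasures within this window, the MDS property guarantees decoding of $\tilde{\bf s}_i$ by slot $i+N_1^j+n_{j+1}-1=i+N_1^j+T_{j+1}$, as claimed by the $(n_{j+1},k,N_1^j+T_{j+1})_{\mathbb{F}}$ designation (with the block-code deadline shifted by the streaming offset $N_1^j$).

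Finally, unpack $T_{j+1}$ using \eqref{eq:T_j} and split the sum as $\sum_{l=1,\,l\neq j+1}^{L+1}N_l=N_1^j+\sum_{l=j+2}^{L+1}N_l$. The $N_1^j$ cancels, producing the displayed equality $i+N_1^j+T_{j+1}=i+T-\sum_{l=j+2}^{L+1}N_l$. There is no real obstacle here; the only subtlety to flag is that the argument is \emph{per block} of $\tilde{\bf s}_i$, so one must verify that the MDS structure of $\svv{G}_i^{(r_j)}$ is preserved independently of the erasure history used to build it, and that is precisely what Lemma~\ref{lem:lem1} delivers by induction on $j$.
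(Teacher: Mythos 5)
Your proposal is correct and follows essentially the same route the paper intends: Lemma~\ref{lem:lem1} gives that $\svv{G}_i^{(r_j)}$ generates an $(n_{j+1},k)$ MDS code, the recalled MDS block-code lemma makes it $N_{j+1}$-achievable with all symbols decodable by the end of the block at slot $i+N_1^j+n_{j+1}-1=i+N_1^j+T_{j+1}$, and the final identity is the same index bookkeeping via $\sum_{l=1,\,l\neq j+1}^{L+1}N_l=N_1^j+\sum_{l=j+2}^{L+1}N_l$. No substantive difference from the paper's argument.
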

We take a closer look at the channel between the last relay and the destination $(r_{L},r_{L+1})$ . Following Corollary~\ref{col:col2} we have
\begin{corollary}
For any $i\in\mathbb{Z}_{+}$, and for any $N_1,\ldots,N_{L+1}$-erasure sequence, ${\bf {\tilde s}}_i$ can be decoded at the destination at time 
\begin{align}
     i+T-\sum_{l=L+2}^{L+1}N_l=i+T,
\end{align}
at the destination, i.e., using the construction suggested above, ${\bf {\tilde s}}_i$ can be decoded at the destination with delay of $T$ for any $e^{L+1,\infty}\in\Omega_{N_{L+1}}$.
\label{col:col3}
\end{corollary}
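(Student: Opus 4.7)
The plan is to derive this corollary as a direct specialization of Corollary~\ref{col:col2}. Corollary~\ref{col:col2} states that, for each $j \in \{0,\ldots,L\}$, all symbols of $\tilde{\bf s}_i$ can be decoded at relay $r_{j+1}$ by time $i + T - \sum_{l=j+2}^{L+1} N_l$, under any admissible $N_1,\ldots,N_{L+1}$-erasure sequence. Specializing to $j = L$, the destination $r_{L+1}$ receives transmissions from the last relay $r_L$ over the erasure channel $(r_L, r_{L+1})$.

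The first step is to verify the range of the index of summation. With $j = L$, the sum $\sum_{l=L+2}^{L+1} N_l$ is an empty sum and thus evaluates to $0$. Consequently, Corollary~\ref{col:col2} guarantees that $\tilde{\bf s}_i$ is decoded at $r_{L+1}$ by time $i + T - 0 = i + T$, which is precisely the required delay constraint from Definition~\ref{def:def1}.

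The second step is to confirm that Corollary~\ref{col:col2} indeed applies for $j = L$. For this, I would appeal to Lemma~\ref{lem:lem1}, which ensures (by induction on $j$) that $\svv{G}_i^{(r_L)}$ is the generator matrix of an $(n_{L+1}, k)$ MDS code, and that relay $r_L$ has recovered $\tilde{\bf s}_i$ by time $i + N_1^L + k - 1$, which is before its transmission window ends. Combined with the assumption that channel $(r_L, r_{L+1})$ introduces at most $N_{L+1}$ erasures (i.e., $e^{L+1,\infty} \in \Omega_{N_{L+1}}$), the MDS property guarantees that the $k$ information symbols can be recovered from any $n_{L+1} - N_{L+1} = k$ non-erased symbols transmitted by $r_L$ within the window of length $n_{L+1}$ starting at time $i + N_1^L$. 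The header attached to each symbol (as described in the encoding scheme) allows the destination to identify which columns of $\svv{G}_{\max}$ generated the received symbols, so decoding is indeed feasible.

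There is no substantive obstacle here: the result is essentially a bookkeeping statement that identifies the empty-sum boundary case of the general per-hop decoding guarantee. The only thing to be mindful of is consistency of the time accounting across all hops, which has already been established in the induction underlying Lemma~\ref{lem:lem1} and Corollary~\ref{col:col2}, so the corollary follows with essentially no additional argument.
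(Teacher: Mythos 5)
Your proposal is correct and follows the same route as the paper: Corollary~\ref{col:col3} is obtained by specializing Corollary~\ref{col:col2} to $j=L$, where the sum $\sum_{l=L+2}^{L+1}N_l$ is empty, so the decoding deadline becomes $i+T$. The supporting appeals to Lemma~\ref{lem:lem1} and the MDS/header argument match the paper's justification, so no further argument is needed.
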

Next, we show that this Corollary means that the construction suggested above generates a $(n_1,\ldots,n_{L+1},k,T)_{\mathbb{F}}$ streaming code which is also $N_1,\ldots,N_{L+1}$-achievable.

\begin{lemma}
The streaming code resulting from using $\svv{G}_i^{(r_j)}$ defined above in each relay $j\in\{0,\ldots,L\}$ for every $i\in\mathbb{Z}_{+}$ is a $(n_1,\ldots,n_{L+1},k,T)_{\mathbb{F}}$ streaming code which is also $N_1,\ldots,N_{L+1}$-achievable.
\label{lem:lem3}
\end{lemma}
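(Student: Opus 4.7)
The plan is to reduce this lemma to a bookkeeping argument on top of Corollary~\ref{col:col3}, which already hands us decodability of each diagonal vector $\tilde{\mathbf{s}}_i$ at the destination with delay exactly $T$. First, I would verify that the encoding/relaying functions described by \eqref{eq:diagInter} and \eqref{eq:xSentAtRj}, together with the header bits, indeed conform to Definition~\ref{def:def1}: the source map $f_i^{(r_0)}$ depends only on $\mathbf{s}_0,\ldots,\mathbf{s}_i$ (causality holds because the $l$-th symbol of $\tilde{\mathbf{s}}_{i-l}$ involves only $\mathbf{s}_{i-l},\ldots,\mathbf{s}_{i-l+k-1}$ with indices at most $i$), and each relay map $f_i^{(r_j)}$ depends only on the (possibly erased) packets received up to time $i$, since the construction in steps (1)--(5) only reads from $\mathbf{y}_0^{(r_j)},\ldots,\mathbf{y}_i^{(r_j)}$. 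The decoder $\phi_{i+T}$ is the obvious one that collects the outputs of the destination's block decoders for the relevant diagonals.

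Next, I would establish the achievability condition in Definition~\ref{def:def4}. Fix any admissible erasure pattern $(e^{1,\infty},\ldots,e^{L+1,\infty})$ with $e^{j,\infty}\in\Omega_{N_j}$. By Corollary~\ref{col:col3}, each diagonal $\tilde{\mathbf{s}}_i$ is correctly recovered at the destination by time $i+T$. The key bookkeeping step is then to observe, from the definition \eqref{eq:tilde_s} of $\tilde{\mathbf{s}}_i$, that the $l$-th entry of ${\mathbf{s}}_i$ equals the $l$-th entry of $\tilde{\mathbf{s}}_{i-l}$ for each $l\in\{0,1,\ldots,k-1\}$. Applying Corollary~\ref{col:col3} to each of these $k$ diagonals, the entry $s_i[l]$ is available at the destination by time $(i-l)+T\leq i+T$. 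Consequently, at time $i+T$ the decoder has all $k$ coordinates of $\mathbf{s}_i$ and can output $\hat{\mathbf{s}}_i=\mathbf{s}_i$, which is exactly what Definition~\ref{def:def4} requires.

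Two small points need checking to make the argument airtight. First, for indices $i<k-1$ some of the diagonals $\tilde{\mathbf{s}}_{i-l}$ involve the convention $\mathbf{s}_j=\mathbf{0}$ for $j<0$; this is harmless since those zero symbols are publicly known and the block decoders can simply treat them as side information, so the conclusion of Corollary~\ref{col:col3} still yields $s_i[l]$. Second, I would confirm that the header information itself is delivered reliably: because the header is transmitted in the same packet as the erasure-coded symbol, an erasure of the $t$-th packet erases both the symbol and its header simultaneously, so the non-erased headers suffice for every relay (and the destination) to reconstruct the relevant $\svv{G}_i^{(r_j)}$ used upstream, as already invoked implicitly in the proof of Lemma~\ref{lem:lem1}.

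The main obstacle here is purely notational rather than conceptual: one must keep straight the distinction between the ``horizontal'' message $\mathbf{s}_i$ (the object whose delay constraint is $T$) and the ``diagonal'' message $\tilde{\mathbf{s}}_i$ (the object to which the MDS block code is applied), and check that the worst-case diagonal $\tilde{\mathbf{s}}_{i-(k-1)}$ indeed finishes decoding at exactly time $i+T$. Once that correspondence is made explicit, the lemma follows directly from Corollary~\ref{col:col3} with no new probabilistic or algebraic input.
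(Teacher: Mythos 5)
Your proposal is correct and takes essentially the same route as the paper: both proofs reduce the lemma to Corollary~\ref{col:col3} (via Lemma~\ref{lem:lem1}) and the bookkeeping observation that $s_i[l]$ is the $l$-th entry of the diagonal message $\tilde{\mathbf{s}}_{i-l}$, so it is recovered by time $(i-l)+T\leq i+T$, giving $\hat{\mathbf{s}}_i=\mathbf{s}_i$ as required by Definition~\ref{def:def4}. One cosmetic slip in your closing remark: the latest-decoded relevant diagonal is $\tilde{\mathbf{s}}_i$ itself (decoded at exactly $i+T$), while $\tilde{\mathbf{s}}_{i-(k-1)}$ finishes earlier, at time $i+T-k+1$; this does not affect the validity of your argument.
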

\begin{proof}
Our goal is to show that the destination ($r_{L+1)}$) can recover ${\bf s}_i=\left[s_i[0],s_i[1],\cdots,s_i[k]\right]$ based on
\begin{align}
    \left[{\bf y}_0^{(r_{L+1})},{\bf y}_1^{(r_{L+1})},\cdots,{\bf y}_{i+T}^{(r_{L+1})}\right]=\left[g_{n_{L+1}}\left({\bf x}^{(r_L)}_0,e^{L+1}_0\right),\ldots,g_{n_{L+1}}\left({\bf x}^{(r_L)}_{i+T},e^{L+1}_{i+T}\right)\right]
\end{align}

Following Lemma~\ref{lem:lem1} it follows each $\svv{G}_i^{(r_j)}$ is $n_{j+1},k$ MDS code. From Corollary~\ref{col:col3} it follows that for any $i\in\mathbb{Z}_+$, $s_i[0]$ (which is the first element in ${\bf {\tilde{s}}}_i$) can be recovered with a delay of $T$ for any $N_1,\ldots,N_{L+1}$-erasure sequence. Similarly we note that $s_i[1]$ can be recovered with a delay of $T-1$ and $s_i[k-1]$ can be recovered with delay of $T-k$. Thus, we conclude that the $(n_1,\ldots,n_{L+1},k,T)_{\mathbb{F}}$ streaming code resulting from the construction described above is a $(n_1,\ldots,n_{L+1},k,T)_{\mathbb{F}}$ streaming code which is also $N_1,\ldots,N_{L+1}$-achievable.
\end{proof}


Thus to prove Theorem~\ref{thm:thm3}, we need to show that when $|\mathbb{F}|\geq n_{\max}$, $\svv{G}^{(r_j)}_i$ can be generated at each relay $r_j$ for $j\in\{0,\ldots,L\}$ and analyze the overall rate (by bounding the rate of the additional header).

\begin{proof}[Proof of Theorem \ref{thm:thm3}]
We first note that, as mentioned in Section~\ref{sec:networkModel}, an $(n_{max},k)$ MDS code exists as long as $|\mathbb{F}|\geq {n_{\max}}$. Therefore, following Lemma~\ref{lem:lem1}, when $|\mathbb{F}|\geq {n_{\max}}$ it follows that for any $N_1,\ldots,N_{L+1}$-erasure sequence, any $j\in\{0,\ldots,L\}$ and any $i\in \mathbb{Z}_+$, there exists $\svv{G}_i^{(r_j)}$ which is a generator matrix of $(n_{j+1},k)$ MDS code. 

Following Lemma~\ref{lem:lem3} it follows that streaming code resulting from using $\svv{G}_i^{(r_j)}$ defined above in each relay $j\in\{0,\ldots,L\}$ for every $i\in\mathbb{Z}_{+}$ is a $(n_1,\ldots,n_{L+1},k,T)_{\mathbb{F}}$ streaming code which is also $N_1,\ldots,N_{L+1}$-achievable. The rate in relay $r_j$, without taking the size of the header into account, is $\frac{k}{n_{j+1}}$. Thus, from Definition~\ref{def:def5}, the overall rate of transmission is upper bounded by
\begin{align}
    R&\leq \min_{j\in\{0,1,\ldots,L\}} \frac{k}{n_{j+1}} \nonumber \\
    &= C^{+}_{T,N_1,\ldots,N_L}.
\end{align}

The header attached to each packet sent from relay $r_j$ is composed from stacking the $n_{j+1}$ headers used by each symbol generated from a $(n_{j+1},k,N_1^j+T_j)_{\mathbb{F}}$ block code which is part of each transmission packet. As we defined above, this header is a number from $\left[1,\ldots,n_{\max}\right]$. Hence the size of the header is $n_{j+1}\log(n_{\max})$ bits. We further note that the size of the header is upper bounded by $n_{\max}\log(n_{\max})$.

To conclude, each node transmits $n_{j+1}$ coded symbols (each taken from field $\mathbb{F}$) along with $n_{j+1}\log(n_{j+1})$ bits of header to transfer $k$ information symbols (each taken from field $\mathbb{F}$). The overall rate is
\begin{align}
    R &\geq \min_{j}  \frac{k\cdot\log(|\mathbb{F}|)}{n_{j+1}\cdot\log(|\mathbb{F}|)+n_{\max}\lceil\log\left(n_{\max}\right)\rceil} \nonumber \\
    & = \frac{T-\sum_{l=1}^{L+1}N_l+1}{\max_j\left\{T-\sum_{l=1,l\neq j}^{L+1}N_l+1\right\}+\frac{n_{\max}\lceil\log\left(n_{\max}\right)\rceil}{\log(|\mathbb{F}|)}} \nonumber \\
     & = \frac{T-\sum_{l=1}^{L+1}N_l+1}{T-\min_j\left\{\sum_{l=1,l\neq j}^{L+1}N_l\right\}+1+\frac{n_{\max}\lceil\log\left(n_{\max}\right)\rceil}{\log(|\mathbb{F}|)}}
\end{align}
where $n_{\max}$ is defined in (\ref{eq:Tj}).
\end{proof}

\section{An upper bound on loss probability attained by state-dependent symbol-wise DF for random erasure}
\label{sec:upperBound}

In Section~\ref{sec:codScheme}, the state-dependent symbol-wise DF scheme was described, and a lower bound on its achievable rate was derived while assuming a deterministic erasure model. In this section, we develop an upper bound on the average loss probability when this scheme is applied over channels with random (i.i.d.) erasures.

Let $s_i[0],s_i[1],\dots,s_i[k-1]$ be the k source symbols transmitted by node $r_0$ at each discrete time $i$. We note that for the $n_1$, $(n_1,k)$ MDS codes used by the sender, the following property holds:
\begin{itemize}
    \item For every $s_i[v]$ located at the ($v+1$)th position of the length-$k$ packet transmitted at time $i$ by the $(n_1,\ldots,n_{L+1,}k,T)_{\mathbb{F}}$ streaming code over $(r_0,r_1)$, $\hat{s}^{(r_1)}_i[v]$ is generated by the relay, at the latest, at time $i-v+n_1-1$ (i.e., after transmission of $n_1$ symbols from $r_0$). If there are at most $N_1$ erasures inside the window $\{i-v, i-v + 1,\ldots,i-v+n_1-1\}$, then $\hat{s}^{(r_1)}_i[v] = {s}_i[v]$.
\end{itemize}
Hence, ${\bf s}_i$ can be fully recovered at relay $r_1$ if for all $v\in\{0,1,\ldots,k-1\}$, in any window $\{i-v, i-v+1,\ldots,i-v+n_1-1\}$, there are at most $N_1$ erasures. We bound the loss probability by analyzing the probability in which in the window $\{i-k+1,i-k+2,\ldots,i+n_1-1\}$ there are at most $N_1$ erasures.

Since the state-dependent symbol-wise DF encode the same information symbols (per diagonal) in each relay, we note that in the general case, when transmitting the $(n_1,\ldots,n_{L+1,}k,T)_{\mathbb{F}}$ streaming code over $(r_{j-1},r_j)$:
\begin{itemize}
    \item $\hat{\bf s}^{(r_j)}_i[v]$ is generated by the relay, at the latest, at time $i+N_1^{j-1}-v+n_{j}-1$ (i.e., after transmission of $n_{j}$ symbols from relay $r_{j-1}$). If there are at most $N_j$ erasures inside the window $\{i+N_1^{j-1}-v, i-v + 1,\ldots,i+N_1^{j-1}-v+n_j-1\}$, $\hat{s}^{(r_j)}_i[v] = s_i[v]$.
\end{itemize}
Hence, ${\bf s}_i$ can be fully recovered at relay $r_j$ if for all $v\in\{0,1,\ldots,k-1\}$, in any window $\{i+N_1^{j-1}-v, i+N_1^{j-1}-v + 1,\ldots,i+N_1^{j-1}-v+n_j-1\}$ there are at most $N_j$ erasures. Similar to \cite{fong2018optimal}, we bound the loss probability by analyzing the probability in which in the window $\{i+N_1^{j-1}-k+1, i+N_1^{j-1}-k + 1,\ldots,i+N_1^{j-1}+n_j\}$ there are at most $N_j$ erasures.

Denoting the average Loss probability as
\begin{align}
    P_{T,N_1,N_2,\dots,N_{L+1}}\triangleq \lim_{M\to \infty} \frac{1}{M} \prob\{\hat{\bf s}_i\neq {\bf s}_i \}
\end{align}

achieved by the above state-dependent symbol-wise DF strategy under the random erasure model. Define $\alpha_j=\prob(e^j_0=1)$ to be the erasure probability in link $(r_{j-1},r_j)$. According to the achievability conditions we have
\begin{align}
    \prob\left(\hat{\bf s}_i\neq {\bf s}_i \given[\Big] \sum_{u=i-k+1}^{i+n_1+1}e^1_u\leq N_1,\sum_{u=i-k+1}^{i+n_2+1}e^2_u\leq N_2,\ldots,\sum_{u=i-k+1}^{i+n_{j+1}+1}e^{L+1}_u\leq N_{L+1} \right)=0
\end{align}
for every $i\geq T-N_1^{L+1}$. Since
\begin{align}
& \prob\left(\left\{\sum_{u=i-k+1}^{i+n_1+1}e^1_u > N_1 \right\} \bigcup \left\{\sum_{u=i-k+1}^{i+n_2+1}e^2_u> N_2 \right\} \bigcup \ldots \bigcup \left\{\sum_{u=i-k+1}^{i+n_{L+1}+1}e^{L+1}_u > N_{L+1} \right\} \right) \nonumber \\
&\leq \sum_{u=N_{1}+1}^{2k+2N_1+1}{2k+2N_1+1 \choose u} (\alpha_1)^u(1-\alpha_1)^{2k+2N_1+1-u} \nonumber \\
&~~~~ + \sum_{u=N_{2}+1}^{2k+2N_2+1}{2k+2N_2+1 \choose u} (\alpha_2)^u(1-\alpha_2)^{2k+2N_2+1-u}+\ldots \nonumber \\
&~~~~ + \sum_{u=N_{L+1}+1}^{2k+N_{L+1}+1}{2k+2N_{L+1}+1 \choose u} (\alpha_{L+1})^u(1-\alpha_{L+1})^{2k+2N_{L+1}+1-u}
\end{align}

It follows that
\begin{align}
    P_{T,N_1,N_2,\dots,N_{L+1}}&\leq \kappa_1(T,N_1,\ldots,N_{L+1})\cdot (\alpha_1)^{N_1+1}+\kappa_2(T,N_1,\ldots,N_{L+1})\cdot (\alpha_2)^{N_2+1}+\ldots \nonumber \\
    &+\kappa_{L+1}(T,N_1,\ldots,N_{L+1})\cdot (\alpha_{L+1})^{N_{L+1}+1}
\end{align}
where $\kappa_j(T,N_1,\ldots,N_{L+1})$ does not depend on $\alpha_j$ (or on any other $\alpha_k$ for any $k\neq j$). Hence, $P_{T,N_1,N_2,\dots,N_{L+1}}$ decays exponentially fast in $\min\{N_1+1,N_2+1,\ldots,N_{L+1}+1\}$.

\section{Numerical results}
\label{sec:numerical}

In this section, we show the performance of the state-dependent symbol-wise DF scheme on a random model. We consider a statistical $L+1$-node relay network where i.i.d. erasures are independently introduced to all channels. We denote with $\alpha_j$ the probability of experiencing an erasure in each time slot for channel $(r_{j-1},r_j)$.

Similar to \cite{fong2018optimal}, we will compare state-dependent symbol-wise DF with message-wise DF and instantaneous forwarding, which we briefly recall. In message-wise DF, all the symbols in the same source message are decoded by relay $r_j$ subject to the delay constraint $T_j$ such that $\sum_j T_j\leq T$. The overall rate of message-wise DF is
\begin{align}
    R^{\rm Message}_{T,N_1,N_2,\ldots,N_{L+1}}=\max_{(T_1,\ldots,T_{L+1}):\sum_j T_j\leq T}\min\{C_{T_1,N_1},C_{T_2,N_2},\ldots,C_{T_{L+1},N_{L+1}}\}.
\end{align}
More precisely, we consider message-wise DF scheme constructed by concatenating $L+1$ streaming codes where the $j$'th code is a $(n^{\rm Message}_j, k, T_j)_\mathbb{F}$-streaming code.

We also consider an instantaneous forwarding (IF) strategy, which uses a point-to-point streaming code over the $L+1$-node relay network as if the network is a point-to-point channel. More specifically, under the IF strategy, the source transmits symbols generated by the streaming code and relay $r_j$
forwards every symbol received from relay $r_{j-1}$ in each time slot. The overall point-to-point channel induced by the IF strategy experiences an erasure if either one of the channels experiences an erasure. This results with rate
\begin{align}
    R^{\rm IF}=C_{T,\sum_l N_l}.
\end{align}

We first study the error-correcting capabilities of all schemes in case of having two relays with a rate equal to 2/3 and maximal total delay of $T=9$. We further simulate a symmetric topology, i.e., we assume the same error probability for all segments. Since we assume symmetric topology, we focus on schemes that have the same error-correcting capabilities for all segments.
\begin{itemize}
    \item State-dependent symbol-wise DF can support $(N_1,N_2,N_3)=2$ erasures in each segment which is constructed using concatenating three $(6,4,5)_{\mathbb{F}}$ streaming code. While the rate of the code is strictly lower than $2/3$ due to the overhead it uses, as we noted above, it approaches $2/3$ as $|\mathbb{F}|$ increases.
    \item Message-wise DF can support $(N_1,N_2,N_3)=1$ erasures in each segment which is constructed using concatenating three $(3,2,2)_{\mathbb{F}}$ streaming code. As mentioned in \cite{fong2019optimal}, higher rate codes (such as $(4,3,3)_{\mathbb{F}}$) are excluded since $(3,2,2)_{\mathbb{F}}$ can correct more erasure patterns.
    \item For IF, we note that since $C_{9,\sum_j N_j}=\frac{10-\sum_j N_j}{10}$ and we require $C_{9,\sum_j N_j}\geq 2/3$ we get $\sum_j N_j\leq 3$. Hence IF uses a $(10,7,9)_{\mathbb{F}}$ streaming code.
\end{itemize}

In Figure~\ref{fig:performance} we plot the frame loss ratio for state-dependent symbol-wise DF, Message-wise DF and for IF. We further plot the upper bound for State-dependent symbol-wise DF derived in Section~\ref{sec:upperBound} with $R=2/3$, $T\leq 9$ and $N_1=N_2=N_3=2$ while assuming $\forall j:\alpha_j=\alpha$.

Since enforcing rate $2/3$ resulted with low overall delay ($T=6$ for example in case of message-wise DF), we further plot the performance of all schemes where we force $T=9$ and allow the rate to be greater than or equal to $2/3$ (while trying to find the smaller rate possible). Figure~\ref{fig:performance2} depicts all schemes when the rates are as close as possible to $2/3$ (from above) with $T=9$. We note that, again, state-dependent symbol-wise DF outperforms all other methods.

\begin{figure}[h]
    \centering
    \includegraphics[width=0.85\columnwidth]{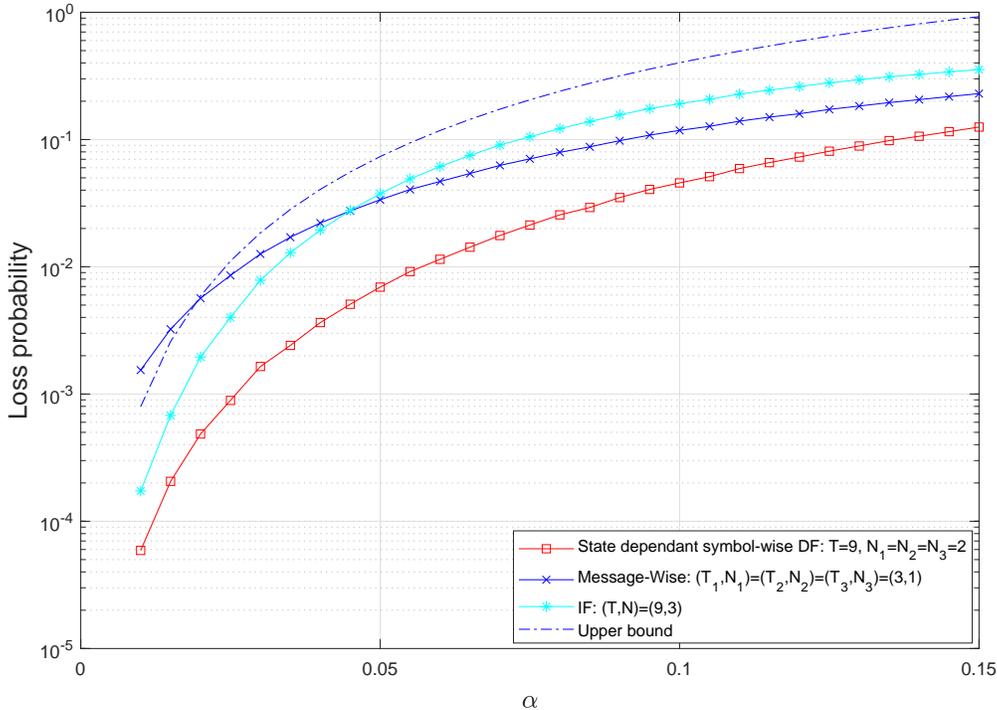}
    \caption{Four-node (two relays) network loss probability for state-dependent symbol-wise DF, message-wise DF and IF with $T \leq 9$, rate $2/3$ and largest $N_1 +N_2+N_3$ where $\alpha$ denotes the erasure probability (same over all hops).}
    \label{fig:performance}
\end{figure}

\begin{figure}[h]
    \centering
    \includegraphics[width=0.85\columnwidth]{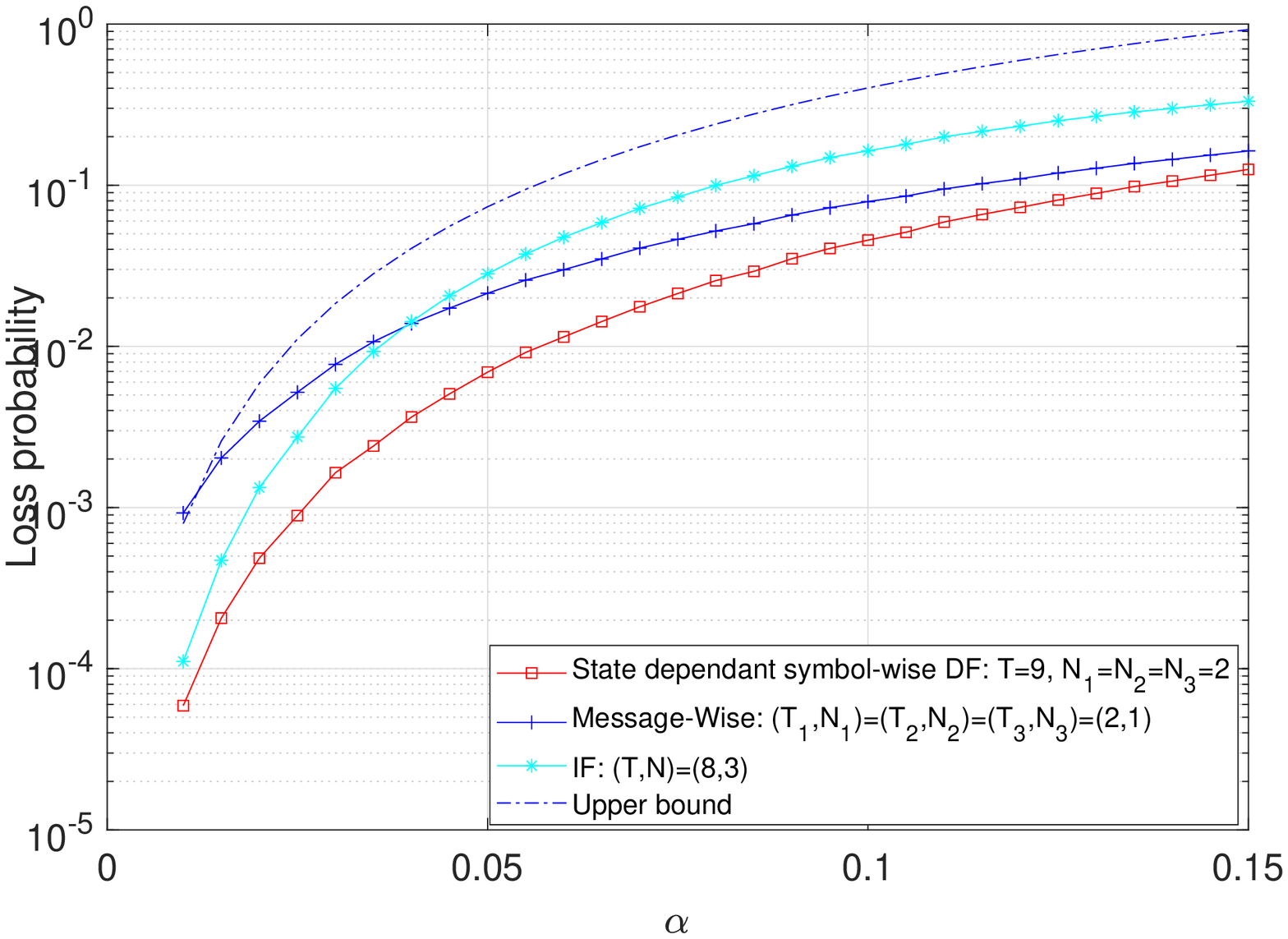}
    \caption{Four-node (two relays) network loss probability for state-dependent symbol-wise DF, message-wise DF and IF with delay $T = 9$, $R\geq2/3$ and largest $N_1 +N_2+N_3$ where $\alpha$ denotes the erasure probability (same over all hops).}
    \label{fig:performance2}
\end{figure}

\section{Extension to Sliding Window Model}
\label{sec:extToSliding}
Consider the following sliding window model. For each $j\in\{1,\ldots,L+1\}$, channel $(r_{j-1},r_j)$ introduces at most $N_j$ erasures in any period of $T+1$ consecutive time slots (sliding window of size $T+1$).

Under the sliding window described above, denote with $C^{+,\rm{sw}}_{T,N_1,\ldots,N_{L+1}}$ as the upper bound on the achievable rate for $(N_1,N_2,\ldots,N_{L+1})$ channel. We further denote $R^{\rm{sw}}$ as the achievable rate under the sliding window model.

Our goal is to show that
\begin{align}
    C^{+,\rm{sw}}_{T,N_1,\ldots,N_{L+1}}\leq C^{+}_{T,N_1,\ldots,N_{L+1}}
    \label{eq:upperEqSW}
\end{align}
and
\begin{align}
R^{\rm{sw}}=R.
\label{eq:achivEqSW}
\end{align}

With respect to the upper bound, since for any $j\in\{1,\ldots,L+1\}$, $N_j$-erasure sequence can be introduced by channel $(r_{j-1},r_j)$ in the sliding window model \eqref{eq:upperEqSW} holds.

Next, we show that the state-dependent scheme can achieve the same rate under the sliding window model. As was shown in Section~\ref{sec:codScheme}, for any $j\in\{1,\ldots,L+1\}$, each symbol can be recovered as long as channel $(r_{j-1},r_j)$ introduces at most $N_j$ erasures in a window of size $n_j$. From \eqref{eq:generalRelay} we have $n_j<T+1$, it follows that all $L$ conditions hold under the sliding window model thus the state-dependent code can recover all symbols and hence \eqref{eq:achivEqSW} holds.

\section{Concluding Remarks}
Streaming codes became an integral part of real-time interactive video streaming applications. The ability to improve the quality of user-experience while meeting stringent latency constraints helped to transform these applications from a niche to one of the fastest-growing segments of IP traffic.

The problem of quantifying the capacity of streaming codes was first addressed for point-to-point links. Later it was extended for three-node networks and in this work, we extended the upper bound of streaming codes to multi-hop relay networks (i.e., for networks with any number of nodes) and further, we suggested an achievable scheme which achieves this upper bound with a gap that vanishes as the field size used gets larger. While for three-node networks, capacity can be achieved with ``state-independent'' symbol-wise DF coding scheme, we showed that this scheme could not be easily extended to more than three-nodes and suggested a symbol-wise DF ``state-dependent'' coding scheme.

While starting by describing the upper bound and proving the achievable scheme over ``deterministic arbitrary erasure channel'',  Section~\ref{sec:extToSliding} outlined that these results also hold for a sliding window model. While this model is a simplified model for a random erasure model, Section~\ref{sec:upperBound} showed that the error probability of the achievable scheme could be upper bounded for random channel and Section~\ref{sec:numerical} provided numerical results showing that the proposed coding scheme outperforms other simple options such as message-wise DF and instantaneous forwarding.

Packet erasures can occur either in sparse patterns or in a bursty manner. The first works on streaming codes analyzed the case of burst erasure channel. Later, works studied channels which could have either a burst or sparse erasures in a given window. Future work may explore streaming codes in a setup with relays in this case. Another avenue is to study streaming codes for more complicated networks.
\appendices
\section{Derivations In The Converse Proof Of Theorem\ref{thm:conv}}
\label{sec:app1}
{\emph Deviation of \eqref{eq:rateOfFirst}}: Since the $(N_1,\ldots,N_{L+1})$-achievable $(n_1,\ldots,n_{L+1},k, T)_{\mathbb{F}}$-streaming code restricted to channel $(r_0,r_1)$ can be viewed as a point-to-point streaming code with rate $k/n_1$ and delay $T-N_2^{L+1}$ which can correct the periodic erasure sequence $\tilde{e}^{\infty}$ illustrated in Figure~\ref{fig:Periodic1}, it follows from the arguments in \cite{badr2017layered} Section IV-A that \eqref{eq:rateOfFirst} holds. For the sake of completeness, we present a rigorous proof below.

Using~\eqref{eq:outcomeFirst}, we have
\begin{align}
    |\mathbb{F}|^{k\times j(T-N_2^{L+1}+1)}\leq |\mathbb{F}|^{n\times (j+1)(T-N_1^{L+1}+1)}
    \label{eq:App1FirstSeg_1}
\end{align}
because $j(T-N_2^{L+1}+1)$ source messages can take $|\mathbb{F}|^{k\times j(T-N_2^{L+1}+1)}$ values and $(j+1)(T-N_1^{L+1}+1)$ source packets can take at most $|\mathbb{F}|^{n\times (j+1)(T-N_1^{L+1}+1)}$ values for each $j$. Taking logarithm on both sides of \eqref{eq:App1FirstSeg_1} followed by dividing both sides by $j$, we have
\begin{align}
    k(T-N_2^{L+1}+1) \leq n(1 + 1/j)(T-N_1^{L+1}+1)
    \label{eq:App1FirstSeg_2}
\end{align}
Since \eqref{eq:App1FirstSeg_2} holds for all $j\in\mathbb{N}$, it follows that \eqref{eq:rateOfFirst} hold.

{\emph Deviation of \eqref{eq:rateOfLast}}: Since the $(N_1,\ldots,N_{L+1})$-achievable $(n_1,\ldots,n_{L+1},k, T)_{\mathbb{F}}$-streaming code restricted to channel $(r_{L},r_{L+1})$ can be viewed as a point-to-point streaming code with rate $k/n_{L+1}$ and delay $T-N_1^{L}$ which can correct the periodic erasure sequence $\hat{e}^{\infty}$ illustrated in Figure~\ref{fig:Periodic2}, it follows from the arguments in \cite{badr2017layered} Section IV-A that \eqref{eq:rateOfLast} holds. For the sake of completeness, we present a rigorous proof below.

Using~\eqref{eq:outcomeLast}, we have
\begin{align}
    |\mathbb{F}|^{k\times j(T-N_1^{L}+1)}\leq |\mathbb{F}|^{n\times (j+1)(T-N_1^{L+1}+1)}
    \label{eq:App1LastSeg_1}
\end{align}
because $j(T-N_1^{L}+1)$ source messages can take $|\mathbb{F}|^{k\times j(T-N_1^{L}+1)}$ values and $(j+1)(T-N_1^{L+1}+1)$ source packets can take at most $|\mathbb{F}|^{n\times (j+1)(T-N_1^{L+1}+1)}$ values for each $j$. Taking logarithm on both sides of \eqref{eq:App1LastSeg_1} followed by dividing both sides by $j$, we have
\begin{align}
    k(T-N_1^{L}+1) \leq n(1 + 1/j)(T-N_1^{L+1}+1)
    \label{eq:App1LastSeg_2}
\end{align}
Since \eqref{eq:App1LastSeg_2} holds for all $j\in\mathbb{N}$, it follows that \eqref{eq:rateOfLast} hold.

{\emph Deviation of \eqref{eq:outcomeMid}}: Since the $(N_1,\ldots,N_{L+1})$-achievable $(n_1,\ldots,n_{L+1},k, T)_{\mathbb{F}}$-streaming code restricted to channel $(r_{j-1},r_{j})$ can be viewed as a point-to-point streaming code with rate $k/n_{j}$ and delay $T-N_1^{j-1}-N_{j+1}^{L+1}$ which can correct the periodic erasure sequence $\breve{e}^{\infty}$ illustrated in Figure~\ref{fig:Periodic3}, it follows from the arguments in \cite{badr2017layered} Section IV-A that \eqref{eq:generalRelay} holds. For the sake of completeness, we present a rigorous proof below.

Using~\eqref{eq:outcomeMid}, we have
\begin{align}
    |\mathbb{F}|^{k\times j(T-N_1^{j-1}-N_{j+1}^{L+1}+1)}\leq |\mathbb{F}|^{n\times (j+1)(T-N_1^{L+1}+1)}
    \label{eq:App1MidSeg_1}
\end{align}
because $j(T-N_1^{L}+1)$ source messages can take $|\mathbb{F}|^{k\times j(T-N_1^{j-1}-N_{j+1}^{L+1}+1)}$ values and $(j+1)(T-N_1^{j-1}-N_{j+1}^{L+1}+1)$ source packets can take at most $|\mathbb{F}|^{n\times (j+1)(T-N_1^{L+1}+1)}$ values for each $j$. Taking logarithm on both sides of \eqref{eq:App1LastSeg_1} followed by dividing both sides by $j$, we have
\begin{align}
    k(T-N_1^{j-1}-N_{j+1}^{L+1}+1) \leq n(1 + 1/j)(T-N_1^{L+1}+1)
    \label{eq:App1MidSeg_2}
\end{align}
Since \eqref{eq:App1MidSeg_2} holds for all $j\in\mathbb{N}$, it follows that \eqref{eq:rateOfLast} hold.

\section{Examples for rate change in relay}
\label{app:rateChangeInRelay}
As mentioned above, relay $r_j$ may need to increase or decrease the rate of the code used by relay $r_{j-1}$. Below, we show examples for the following two cases:
\begin{itemize}
    \item $\frac{k}{n_{j+2}}>\frac{k}{n_{j+1}}$. This means that $n_{j+2}<n_{j+1}$, i.e., that the block size of the MDS code used by relay $r_{j+1}$ is smaller than the block size used by relay $r_{j}$. At time $i+T-\sum_{j+1}^{L+1}N_l+1$, node $r_{j+1}$ can recover the original data and send any of the erased symbols of the code used by $r_{j}$.

    An example is given in Table~\ref{tab:exampleForIncreaseRate} for $N_{j+1}=2,~N_{j+2}=1,~T'=4$ (where $T'=T-\sum_{l=1,l\neq j+1,j+2}N_l$).
    We note that in this example $k=T-\sum N_l+1=T'-N_{j+1}-N_{j+2}=2$.

    Relay $r_{j+1}$ forwards the first $k-1=1$ symbols it receives. At $i+N_{1}^j+3$ the relay can recover the original data, hence from this point it sends (for example) the erased symbols.

\begin{table}[h]


\begin{center}
\begin{tabular} {|c|c|c|c|c|}
     \hline
     $i+N_{1}^{j}$ & $i+N_{1}^{j}+1$& $i+N_{1}^{j}+2$ & $i+N_{1}^{j}+3$ & $i+N_{1}^{j}+4$  \\ \hline
    \multicolumn{5}{|c|}{Link $(r_j,r_{j+1})$} \\ \hline
     \tikzmark{topD}{\textcolor{black}{a_{i}}} & & \tikzmark{topE}{} & & \\ \hline
     & $\textcolor{black}{b_{i+1}}$ & & & \\ \hline
     & &  \cellcolor{gray!25}{$\textcolor{black}{f^1(a_i,b_{i+1})}$} & & \\ \hline
     \tikzmark{rightD}{} & & \tikzmark{rightE}{} & \cellcolor{gray!25}{$\textcolor{black}{f^2(a_i,b_{i+1})}$} & \\ \hline
     \multicolumn{5}{|c|}{Link $(r_{j+1},r_{j+2})$} \\ \hline
     & & $\textcolor{black}{b_{i+1}}$ & &   \\ \hline
     & & & $\textcolor{black}{a_{i}}$ &  \\ \hline
     & & & & \cellcolor{gray!25}{$\textcolor{black}{f^1(a_i,b_{i+1})}$}  \\ \hline
\end{tabular}
\DrawVLine[black, thick, opacity=0.8]{topD}{rightD}
\DrawVLine[black, thick, opacity=0.8]{topE}{rightE}
\end{center}
\caption{Example of increasing the rate between links. In this example, $N_{j+1}=2,~N_{j+2}=1,~T'=4$, hence $\frac{k}{n_{j+2}}=2/4<2/3=\frac{k}{n_{j+1}}$. Assuming symbol $i+N_{1}^j$ and $i+N_1^j+2$ were erased when transmitted in link $(r_{j},r_{j+1})$. Parity symbol are shaded.}
\label{tab:exampleForIncreaseRate}
\end{table}
    \item $\frac{k}{n_{j+2}}<\frac{k}{n_{j+1}}$. This means that $n_{j+2}>n_{j+1}$, i.e., the block size of the code used by relay $r_{j+1}$ is larger than the block size used by relay $r_{j}$.

    At time $i+T-\sum_{j+1}^{L+1}N_l+1$, relay $r_{j+1}$ can again recover the original data and hence transmit additional $n_{j+2}-k$ symbols needed to allow handling any $N_{j+2}$ erasures in the link $(r_{j+1},r_{j+2})$.

    An example is given in Table~\ref{tab:exampleForReduceRate} for $N_{j+1}=1,~N_{j+2}=2,~T'=4$ (where, again, $T'=T-\sum_{l=1,l\neq j+1,j+2}$).

    Relay $r_{j+1}$ forwards the first $k-1=1$ symbols it receives. At $i+N_{1}^j+2$, the relay can recover the original data, hence from this point is sends (for example) the erased symbols while adding parity symbols to reach the required rate.

    \begin{table}[h]
    \begin{center}
    \begin{tabular} {|c|c|c|c|c|}
         \hline
         $i+N_{1}^{j}$ & $i+N_{1}^{j}+1$& $i+N_{1}^{j}+2$ & $i+N_{1}^{j}+3$ & $i+N_{1}^{j}+4$  \\ \hline
        \multicolumn{5}{|c|}{Link $(r_j,r_{j+1})$} \\ \hline
        \tikzmark{topD}{\textcolor{black}{a_{i}}} & & & & \\ \hline
         & $\textcolor{black}{b_{i+1}}$ & & & \\ \hline
         \tikzmark{rightD}{} & &  \cellcolor{gray!25}{$\textcolor{black}{f^1(a_i,b_{i+1})}$} & & \\ \hline
         \multicolumn{5}{|c|}{Link $(r_{j+1},r_{j+2})$} \\ \hline
         & $\textcolor{black}{b_{i+1}}$ & & &  \\ \hline
         & & $\textcolor{black}{a_i}$ & &  \\ \hline
         & & &  \cellcolor{gray!25}{$\textcolor{black}{f^1(a_i,b_{i+1})}$} &  \\ \hline
         & & & & \cellcolor{gray!25}{$\textcolor{black}{f^2(a_i,b_{i+1})}$}  \\ \hline
    \end{tabular}
    \DrawVLine[black, thick, opacity=0.8]{topD}{rightD}
    \end{center}
    \caption{Example of reducing rate between nodes. In this example, $N_{j+1}=1,~N_{j+2}=2,~T'=4$, hence $\frac{k}{n_{j+1}}=2/3>2/4=\frac{k}{n_{j+2}}$. Assuming symbol $i+N_1^j$ was erased when transmitted in link $(r_{j},r_{j+1})$.Parity symbols are shaded.}
    \label{tab:exampleForReduceRate}
    \end{table}
\end{itemize}

\bibliographystyle{IEEEtran}
\bibliography{eladd.bib}

\end{document}